\preprint{BRX-TH-6721}
\DeclareMathAlphabet{\mathpzc}{OT1}{pzc}{m}{it}
\DeclareMathAlphabet\mathsf{OT1}{lcmss}{m}{n}
\newcommand{\be}{\begin{equation}}
\newcommand{\ee}{\end{equation}}
\newtheorem{theorem}{Theorem}[section]
\newtheorem{lemma}[theorem]{Lemma}
\newtheorem{corollary}[theorem]{Corollary}
\title{\boldmath Testing holographic entropy inequalities in $2+1$ dimensions}
\author[a]{Brianna Grado-White,}
\author[a]{Guglielmo Grimaldi,}
\author[a]{Matthew Headrick,} 
\author[b]{and Veronika E. Hubeny}
\affiliation[a]{Martin Fisher School of Physics, Brandeis University, Waltham MA 02453, USA}
\affiliation[b]{Center for Quantum Mathematics and Physics (QMAP)\\ Department of Physics \& Astronomy, University of California, Davis CA, USA}
\emailAdd{bgradowhite@brandeis.edu}
\emailAdd{ggrimaldi@brandeis.edu}
\emailAdd{headrick@brandeis.edu}
\emailAdd{veronika@physics.ucdavis.edu}
\abstract{We address the question of whether holographic entropy inequalities obeyed in static states (by the RT formula) are always obeyed in time-dependent states (by the HRT formula), focusing on the case where the bulk spacetime is $2+1$ dimensional. An affirmative answer to this question was previously claimed by Czech-Dong. We point out an error in their proof when the bulk is multiply connected. We nonetheless find strong support, of two kinds, for an affirmative answer in that case.  We extend the Czech-Dong proof for simply-connected spacetimes to spacetimes with $\pi_1=\mathbb{Z}$ (i.e.\ 2-boundary, genus-0 wormholes). Specializing to vacuum solutions,  we also numerically test thousands of distinct inequalities (including all known RT inequalities for up to 6 regions) on millions of randomly chosen configurations of regions and bulk spacetimes, including three different multiply-connected topologies; we find no counterexamples. In an appendix, we prove some (dimension-independent) facts about degenerate HRT surfaces and symmetry breaking. 
\newline
\newline
A video abstract is available at \url{https://www.youtube.com/watch?v=ols92YU8rus}.
}
\begin{document}
\maketitle
\flushbottom

\section{Introduction}

One of the major themes of holography has been the deep and pervasive connection between entanglement and spacetime geometry. This idea has its most concrete manifestation in the Ryu-Takayanagi (RT) formula, which states that in a holographic duality the entanglement entropy $S(A)$ of a boundary subregion $A$ of a static (or time-reflection invariant) time slice is given by the area of the minimal surface homologous to $A$ \cite{Ryu:2006ef}. The study of  entropy inequalities, begun as a check of validity for the RT proposal, has become an important arena for exploring the entanglement structure of holographic states.  In particular, entropy inequalities help to classify which states admit classical bulk duals: it has been shown that the RT formula (and therefore static holographic states) obey an infinite set of additional inequalities not obeyed by general quantum states \cite{Hayden:2011ag,Bao:2015bfa}. 

The first and simplest example of such an inequality is called the monogamy of mutual information (MMI)  \cite{Hayden:2011ag}, which reads
\be\label{eq:mmi}
S(AB) + S(BC) + S(AC) \geq S(A) + S(B) + S(C) + S(ABC)\,.
\ee
MMI is in fact a part of an infinite family of inequalities invariant under a dihedral action on the set of regions, the next member of which is the following 5-party inequality:
\begin{align}\label{eq:5-party}
\begin{split}
S(ABC) + &S(BCD) + S(CDE) + S(DEA) + S(EAB)\\
& \geq  S(AB) + S(BC) + S(CD) + S(DE) + S(EA) + S(ABCDE)\,.
\end{split}
\end{align}
Such holographic entropy inequalities together define the RT entropy cone in entropy space.\footnote{All the known inequalities obeyed by RT have been proven using so-called contraction maps \cite{Bao:2015bfa}. We will assume that this is a general fact (as argued in \cite{Bao:2024obe}), and somewhat sloppily use ``RT inequality'' to mean ``inequality provable by a contraction map''.} Despite much progress in recent years in enumerating and characterizing these inequalities, the full set remains unknown.\footnote{The program of elucidating the RT entropy cone for $\mathsf{N}$ parties \cite{Bao:2015bfa,Marolf:2017shp,Hubeny:2018trv,Hubeny:2018ijt,HernandezCuenca:2019wgh,He:2019ttu,He:2020xuo,Avis:2021xnz,Czech:2022fzb,Hernandez-Cuenca:2023iqh,Czech:2023xed} has proceeded along several distinct but interweaving routes, some focusing on characterizing its extreme rays and  others on uncovering its facets (defined by inequalities).  Since both the number of inequalities and the number of extreme rays grows very rapidly with $\mathsf{N}$, to date, we have complete knowledge only up to $\mathsf{N}=5$.   Nevertheless, thousands of extreme ray and inequality orbits are known for $\mathsf{N}=6$ \cite{Hernandez-Cuenca:2023iqh, hecdata}, as well as 2 infinite families for arbitrarily high $\mathsf{N}$ \cite{Czech:2023xed}. In building up towards a better eventual understanding of the RT cone, some works considered utilizing a more primal construct, the subaddivity cone, to reconstruct the RT cone extreme rays \cite{Hernandez-Cuenca:2019jpv,Hernandez-Cuenca:2022pst,He:2022bmi,He:2023cco,He:2023aif}, some examined the structural aspects or proof methods \cite{Cui:2018dyq,Li:2022jji,Czech:2024rco,Bao:2024obe}, while others explored various generalizations of the RT cone \cite{Bao:2020zgx,Walter:2020zvt,Bao:2020mqq,Akers:2021lms,Bao:2021gzu,He:2023rox} or focused on simpler structure afforded by averaging entropies \cite{Czech:2021rxe,Fadel:2021urx}.} Furthermore, little is known about their interpretation from a quantum information perspective, in other words, what they tell us about the qualitative nature of holographic states, as opposed to general quantum states. (However, for a conjectured  
interpretation of MMI, see \cite{Cui:2018dyq}, and for a generalization to higher parties, see \cite{Czech:2021rxe}.)

On the other hand, much less is known about properties of entropies in time-dependent states, for which the covariant prescription by Hubeny-Rangamani-Takayanagi (HRT) is required \cite{Hubeny:2007xt}. In particular, whether the HRT formula is constrained by the same set of inequalities as the RT formula remains an open question. This question goes to the heart of the physical meaning and implications of the inequalities. In the RT setting, perhaps surprisingly, the proofs of the inequalities invoke essentially no physics: no equations of motion, no energy conditions, and no boundary conditions. Even a Riemannian manifold is not required; the RT inequalities are in some sense really statements about weighted graphs \cite{Bao:2015bfa}. This is not true in the HRT setting; indeed, even basic inequalities such as strong subadditivity (SSA) are general relativity theorems, and their proofs make use of some basic physical requirements obeyed by the bulk spacetime, such as global hyperbolicity, the Einstein equation, the null energy condition, and AdS boundary conditions.\footnote{In proposed generalizations of the HRT formula outside of the classical, asymptotically AdS setting, such as \cite{Akers:2019lzs,Apolo:2020bld,Bousso:2023sya}, some of these specific conditions may be relaxed or altered, but the point remains that some dynamical information is required for proving SSA and other inequalities.} If the higher inequalities are valid, their proofs will also be general relativity theorems, which will tell us something new about holographic spacetimes. On the quantum information side, it is important to know whether the inequalities represent constraints on the entanglement structure of general holographic states, or merely static ones.

The validity of RT inequalities for HRT entropies has been shown in some special cases. In \cite{Wall:2012uf}, the maximin formulation of HRT was used to show that HRT obeys SSA and MMI. Unfortunately,  the above strategy cannot be used to prove the higher entropy inequalities, as presaged in \cite{Rota:2017ubr}.\footnote{Proving an inequality LHS $\ge$ RHS by maximin requires the RHS to be crossing-free. (Two boundary regions are said to cross if they partially overlap and don't cover the boundary; for example, if the boundary if divided into regions $A,B,C,D$, then $AB$ and $BC$ cross.) This holds for SSA and MMI. However, we have verified that all known higher inequalities involve crossing regions on the RHS.}\footnote{Very recently, the paper \cite{Bousso:2024ysg} claimed a proof of MMI for HRT using a new method, and suggested that this method may be applicable to higher inequalities.} In \cite{Bao:2018wwd}, it was shown that, for large regions and at late times in a thermalization process, where the membrane theory of entanglement spreading \cite{Jonay:2018yei} applies, any RT inequality is valid.

Further progress has been made in $2+1$ dimensional bulk setting, where an HRT surface is a union of spacelike geodesics. A few of the higher RT inequalities were tested numerically in AdS-Vaidya spacetimes, and found to be obeyed \cite{Erdmenger:2017gdk,Caginalp:2019mgu}. A proof that all RT inequalities are obeyed in $2+1$ dimensions was given by the authors of \cite{Czech:2019lps}. The proof proceeded in two steps. First, assuming that the bulk is simply connected, it was shown that, in a given configuration, any given RT inequality can be rewritten as a sum of instances of SSA, which is known to be obeyed by HRT \cite{Wall:2012uf}. The second step extends the proof to the case where the bulk is multiply connected by applying the argument in the first step to its universal covering space. Unfortunately, the second step is not correct: since the covering space has an infinite number of copies of the original spacetime, an infinite number of surfaces enter on both sides of the inequality, rendering it meaningless. 

Hence the validity of the RT inequalities for multiply-connected $2+1$ dimensional bulk spacetimes remains an open question. In this paper, we find evidence supporting their validity, in two ways:
\begin{itemize}
\item In section \ref{sec:numerical}, working with vacuum solutions, which are quotients of AdS$_3$ and where the lengths of spacelike geodesics can therefore be computed algebraically, we test thousands of inequalities (including all known inequalities up to 6 parties and several beyond that) on millions of randomly chosen configurations.
Our search covers three topological classes of solutions: $(n,g)=(2,0)$, $(3,0)$, and $(1,1)$, where $n$ is the number of boundaries and $g$ the genus. We find no counterexamples. We also report on the statistical distribution of the quantity $\Delta=\text{LHS}-\text{RHS}$ for several inequalities.\footnote{In appendix \ref{sec:quotients}, we give an overview of the computation of entanglement entropies in AdS$_3$ quotients. The \emph{Mathematica} code we developed for computing HRT entropies in AdS$_3$ quotients and testing inequalities is available as an ancillary file to this paper on the arXiv and at \href{https://sites.google.com/view/matthew-headrick/mathematica}{https://sites.google.com/view/matthew-headrick/mathematica}; see the notebook \tt{TestingHEIs.nb}.}
\item In section \ref{sec:argument}, building on the proof of \cite{Czech:2019lps} for simply-connected spacetimes, we give an argument that all RT inequalities are valid when the bulk has $\pi_1=\mathbb{Z}$, i.e.\ for $(2,0)$ wormholes.\footnote{In appendix \ref{sec:multipleHRT}, we discuss situations in which a boundary region admits multiple HRT surfaces, and prove a number of properties of such surfaces. As a corollary, we show that, if a boundary region is invariant under a bulk isometry, then it admits an invariant HRT surface, a result we use in section \ref{sec:argument}.} Unfortunately, as we explain, the argument does not immediately generalize to more complicated topologies.
\end{itemize}
We believe that these results, combined with the proof of \cite{Czech:2019lps} for simply connected spacetimes, constitute overwhelming evidence that all RT inequalities are obeyed by the HRT formula in $2+1$ dimensions.

Based on this conclusion, together with the earlier dimension-independent results of \cite{Wall:2012uf,Bao:2018wwd}, it seems very likely that the RT inequalities are obeyed by HRT in all dimensions. This claim is further substantiated by the perspective that the holographic entropy cone can be constructed using formal relations among ``proto-entropies'' \cite{Hubeny:2018trv,Hubeny:2018ijt} captured by the so-called pattern of marginal independence \cite{Hernandez-Cuenca:2022pst}, as these constructs have an inherently discrete structure which is therefore unlikely to be modified by time dependence. Unfortunately, a general proof that the HRT cone coincides with the RT cone is still lacking. We are fortunate by now to have several different but equivalent ways of expressing the HRT formula; in addition to maximin, there are the minimax, U-thread, and V-thread formulas \cite{Headrick:2022nbe}. The challenge before us is to use one of the known formulations --- or a new one --- to prove that these entropies obey the RT inequalities.\footnote{
In particular in upcoming work \cite{GGHH}, we explore the ingredients which suffice to use the minimax prescription for such a proof, which is closely related to the existence of a holographic graph model for general time-dependent spacetimes. 
}

\section{Numerical tests}
\label{sec:numerical}

The goal of this section is to provide numerical evidence for the validity of RT inequalities for the HRT formula when the bulk is a quotient of AdS$_3$. We will test all 1,877 currently known inequalities for $\mathsf{N} \leq 6$ regions \cite{Hernandez-Cuenca:2023iqh, hecdata}, as well as the members up to $\mathsf{k}=20$ of the dihedral family 
\be\label{eq:dihedral}
\sum_{i = 1}^{2\mathsf{k}+1} S(A_i \cdots A_{i+\mathsf{k}}) \geq \sum_{i = 1}^{2\mathsf{k}+1} S(A_i \cdots A_{i+\mathsf{k}-1}) + S(A_1\cdots A_{2\mathsf{k}+1})
\ee
(where $i$ takes values mod $2\mathsf{k}+1$), of which $\mathsf{k}=1$ is MMI and $\mathsf{k}=2$ is the 5-party inequality shown in \eqref{eq:5-party}. In what follows we set $4G_{\mathsf{N}} = 1$, so that all entropies are lengths.

As mentioned above, the RT inequalities have been proved in \cite{Czech:2019lps} when the bulk is simply connected; here, we are interested in testing them for multiply-connected bulks. In the context of pure 3d gravity with negative cosmological constant, these geometries are quotients of AdS$_3$. The resulting spacetimes have spatial slices that are wormholes with $n$ boundaries and genus $g$, or $(n,g)$-wormholes. A general framework to compute entanglement entropies for spacetimes of this kind was developed in \cite{Maxfield:2014kra}, which we review in appendix \ref{sec:quotients}. The important point for our purposes is that the calculation is purely algebraic, and as such does not require solving any differential equations, making it extremely fast. In this section, we make use of this speed to test thousands of distinct inequalities in millions of randomly chosen configurations of the bulk and of the boundary regions. These configurations fall into three topological classes: (i)   rotating BTZ, (ii) a rotating 3-boundary wormhole, and (iii) a rotating (1,1)-wormhole (i.e.\ a geometry with one asymptotic boundary and a genus-1 surface behind the horizon). We find no counterexamples. Along the way, we gather information about the kinds of configurations that saturate or nearly saturate certain inequalities, which may be of use for future studies of holographic entanglement.

As a control, and a check on our code, we also tried tweaking the inequalities, by dropping a term or changing a coefficient slightly, thereby rendering it false (for RT, and therefore for HRT as well). Indeed, in each case a counterexample was quickly found, usually within a few dozen random configurations. ``Counterexamples'' to RT inequalities are also quickly found with any change to the protocol for computing HRT entropies, for example by removing one of the allowed phases in the computation of the minimal geodesic length. We conclude contrapositively that \emph{not} finding a counterexample among millions of random configurations constitutes strong evidence for the validity of a given inequality, as well as for our code.

\subsection{(2,0)-wormholes (BTZ)}
\label{sec:BTZ}

The metric of the rotating BTZ solution reads
\be
\dd s^2 = -f(r) \dd t^2 + \frac{\dd r^2}{f(r)} +  r^2 \left(\dd \phi - \frac{r_{-}r_{+
}}{r^2}\dd t\right)^2, 
\ee
where $r_+ \geq r_- \geq 0$ are the outer and inner horizons, the coordinate $\phi$ is compact with period $\phi \sim \phi + 2\pi$ and $f(r)$ is the blackening factor
\be
f(r) = \frac{(r^2-r_{-}^2)(r^2-r_{+}^2)}{r^2}.
\ee
The two limiting cases, $r_{-} = 0$ and $r_{-} = r_{+}$, correspond to the static and extremal BTZ black holes respectively. The black hole has temperature
\be\label{eq:temperature}
T_{\mathsf{H}} = \frac{r_{+}^2 - r_{-}^2}{2\pi r_{+}}
\ee
and entropy
\be\label{eq:BTZentropy}
S_{\mathsf{BH}} = 2\pi r_{+} = 2\pi(T_{\mathsf{L}} + T_{\mathsf{R}}).
\ee
In the last equality, we have used the left- and right-moving temperatures\footnote{We absorb a non-standard factor of $\pi$ in the definition of the two temperatures for cleanliness of entropy formulas.}
\be
T_{\mathsf{L}} =  \frac{\pi T_{\mathsf{H}}}{1-\Omega} = \frac{r_+ + r_-}{2}\,, \qquad T_{\mathsf{R}} =  \frac{\pi T_{\mathsf{H}}}{1+\Omega} = \frac{r_+ - r_-}{2}\,,
\ee
with $\Omega = r_{-}/r_{+}$ the angular velocity of the black hole.

\begin{figure}
    \centering
    \includegraphics[width=.7\textwidth]{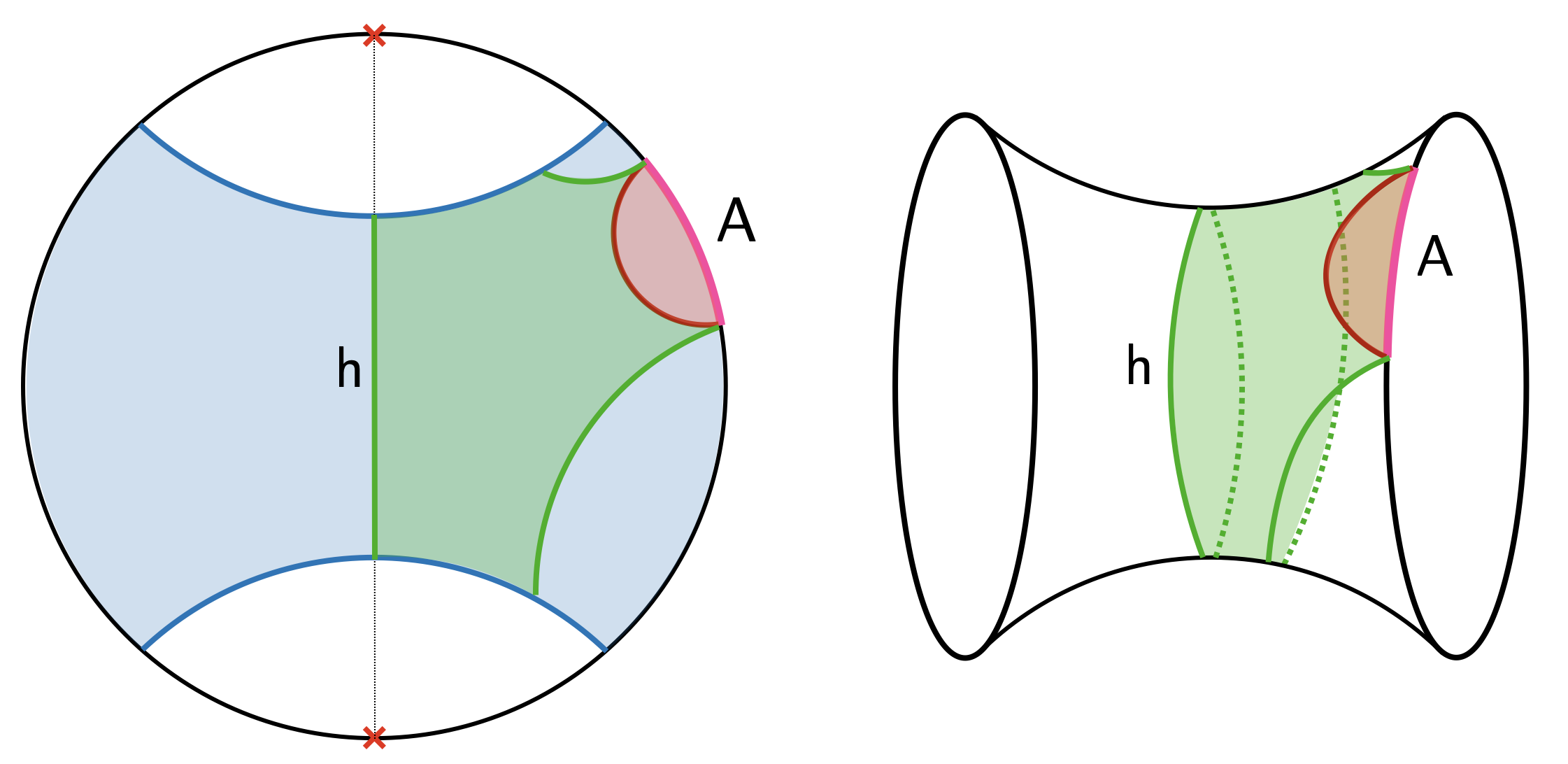}
    \caption{\textbf{[Left]} Spatial slice of the covering space, with the fundamental region of BTZ shaded in blue. The blue geodesics are to be identified under the action of the generator $g$ of the quotient group $\Gamma = \langle g \rangle$. We show a boundary region $\mathsf{A}$ in pink and the two phases of the entanglement entropy in red and green with the two homology regions shaded by the respective color. The horizon $\mathsf{h}$ is contained inside the (thin black) geodesic connecting the two fixed points of $g$, shown by the two red crosses on the disk. \textbf{[Right]} Spatial slice of the quotient geometry for BTZ, with the two phases shown.}
    \label{fig:BTZ}
\end{figure}

\subsubsection{AdS$_3$ quotient}

The BTZ black hole is the simplest quotient of AdS$_3$, corresponding to the case where the subgroup $\Gamma$ is generated by a single element, $\Gamma = \langle g \rangle$. The group $\Gamma$ will be a free group of rank 1, i.e. $\Gamma \simeq \mathbb{Z}$. The static solution corresponds to $\Gamma$ being diagonal, where the left and right elements $(g_L, g_R)$ are equal. Here we are interested in the more general solution and so  will not assume that $\Gamma$ is diagonal. To obtain the generator $g = (g_L, g_R)$ we can exponentiate a spacelike vector of our choice. As explained in appendix \ref{sec:quotients}, one simple choice is to set the $z$ and $t$ components from \eqref{eq:lie-algebra} to zero, obtaining
\be
\xi_{\mathsf{L},\mathsf{R}} = T_{\mathsf{L},\mathsf{R}}\begin{pmatrix}
    0 & 1 \\
    1 & 0
\end{pmatrix}, \qquad g_{\mathsf{L},\mathsf{R}} = e^{2\pi \xi_{\mathsf{L},\mathsf{R}}}=\begin{pmatrix}
    \cosh{2\pi T_{\mathsf{L},\mathsf{R}}}& \sinh{2\pi T_{\mathsf{L},\mathsf{R}}} \\
    \sinh{2\pi T_{\mathsf{L},\mathsf{R}}} & \cosh{2\pi T_{\mathsf{L},\mathsf{R}}}
\end{pmatrix},
\ee
where the $\mathsf{L},\mathsf{R}$ subscripts differentiate between the left and right action. This choice corresponds to a fundamental domain oriented so that the left and right boundaries are centered at $\phi = 0$ and $\phi = \pi$ as shown on the left side of figure \ref{fig:BTZ}. When the black hole is non-rotating one has $r_{-} = 0$, which implies $T_\mathsf{L} = T_\mathsf{R}$ and $g_L = g_R$ as expected from $\Gamma$ being diagonal. The fixed points of the generator $g$, being the eigenvectors, are the points $(1,1)$ and $(1,-1)$ which are located at $\phi = \pi/2$ and $\phi = 3\pi/2$ respectively. The geodesic connecting them contains the horizon of the black hole, as shown in figure \ref{fig:BTZ}. The element $g$ acts by moving points one full period around; therefore, $\xi_{\mathsf{L},\mathsf{R}}$ is the Killing vector that enacts spatial and time translation on the two asymptotic boundaries (though for time translations it acts oppositely on the left and on the right).

\subsubsection{Entanglement entropy}

We can now compute the entanglement entropy of a boundary interval $\mathsf{A}$ in the quotient geometry by using the formula \eqref{eq:lengthQuotient}. The intuition behind the formula is the following. From two endpoints in the quotient space, one gets infinitely many pairs of endpoints in the covering space. These copies of points are related by full $2\pi$ rotations on the boundary, which are enacted by the generator $g$ of $\Gamma$ and its inverse $g^{-1}$. As explained in appendix \ref{sec:quotients}, there exists a geodesic $\mathsf{g}_{\gamma}$ for each fixed endpoint homotopy class $\gamma \in \Gamma$, so there will be a geodesic anchored at $\partial \mathsf{A}$ for every element $\gamma \in \Gamma$.

Since $\Gamma = \langle g \rangle$ is generated by a single element, the most general geodesic will be in the homotopy class of some power of the generator, i.e. $\gamma = g^n$. In practice, the geodesic length of a geodesic $\mathsf{g}_\gamma$ can be computed by choosing two representative endpoints $\mathbf{p}$ and $\mathbf{q}$ (the simplest choice is to fix them to be on the boundary of the fundamental domain in the covering space), then translating the second endpoint $\mathbf{q}$ to the point $\gamma \mathbf{q}$ for the desired $\gamma\in\Gamma$. The length of the geodesic $\mathsf{g}_\gamma$ can then be easily computed in the covering space as the length of the geodesic connecting the endpoints $\mathbf{p}$ and $\gamma\mathbf{q}$, using \eqref{eq:lengthQuotient}. For example, choosing $\gamma = g^{2}$ corresponds (in the quotient space) to a geodesics that travels two full rotations around the wormhole before ending on the second endpoint and (in the covering space) to a homotopically trivial geodesic whose second endpoint has been translated down two copies. Hence, it's clear that most of these geodesics will not dominate in the computation for the entanglement entropy, since for higher powers of $g$ the geodesic $\mathsf{g}$ wraps around the wormhole more than once.

Thus, there will only be two competing geodesics: one in the homotopy class of the identity and one in the homotopy class of $g^{-1}$ (the geodesic in the homotopy class of $g$ does not work because it wraps around once but with a self intersection). When the region $\mathsf{A}$ has spatial and temporal openings $\Delta x$ and $\Delta t$, the first geodesic has length
\be
\ell(\Delta x,\Delta t) =\ln \sinh(T_{\mathsf{R}} (\Delta x- \Delta t))+\ln \sinh(T_{\mathsf{L}}(\Delta x+\Delta t))
\ee
and it is homotopic to the boundary interval, and thus trivially in the correct homology class. The second geodesic has length
\be
\ell(2\pi-\Delta x, -\Delta t)
\ee
and wraps around the horizon (in other words, it is the geodesic homotopic to the complement region $\mathsf{A}^c$). Following appendix \ref{sec:quotients}, to satisfy the homology constraint, this last geodesic must be augmented with the closed geodesic in the conjugacy class of $g$, which is just the horizon, with length $\mathsf{h}$ given by \eqref{eq:BTZentropy}. So the entanglement entropy for a single interval will be the minimum of these two, i.e.
\be
S(\mathsf{A}) = \min\left(\ell(\Delta x,\Delta t), \,\ell(2\pi-\Delta x, -\Delta t) + \mathsf{h}\right).
\ee

\begin{figure}
    \centering
    \includegraphics[width=0.7\textwidth]{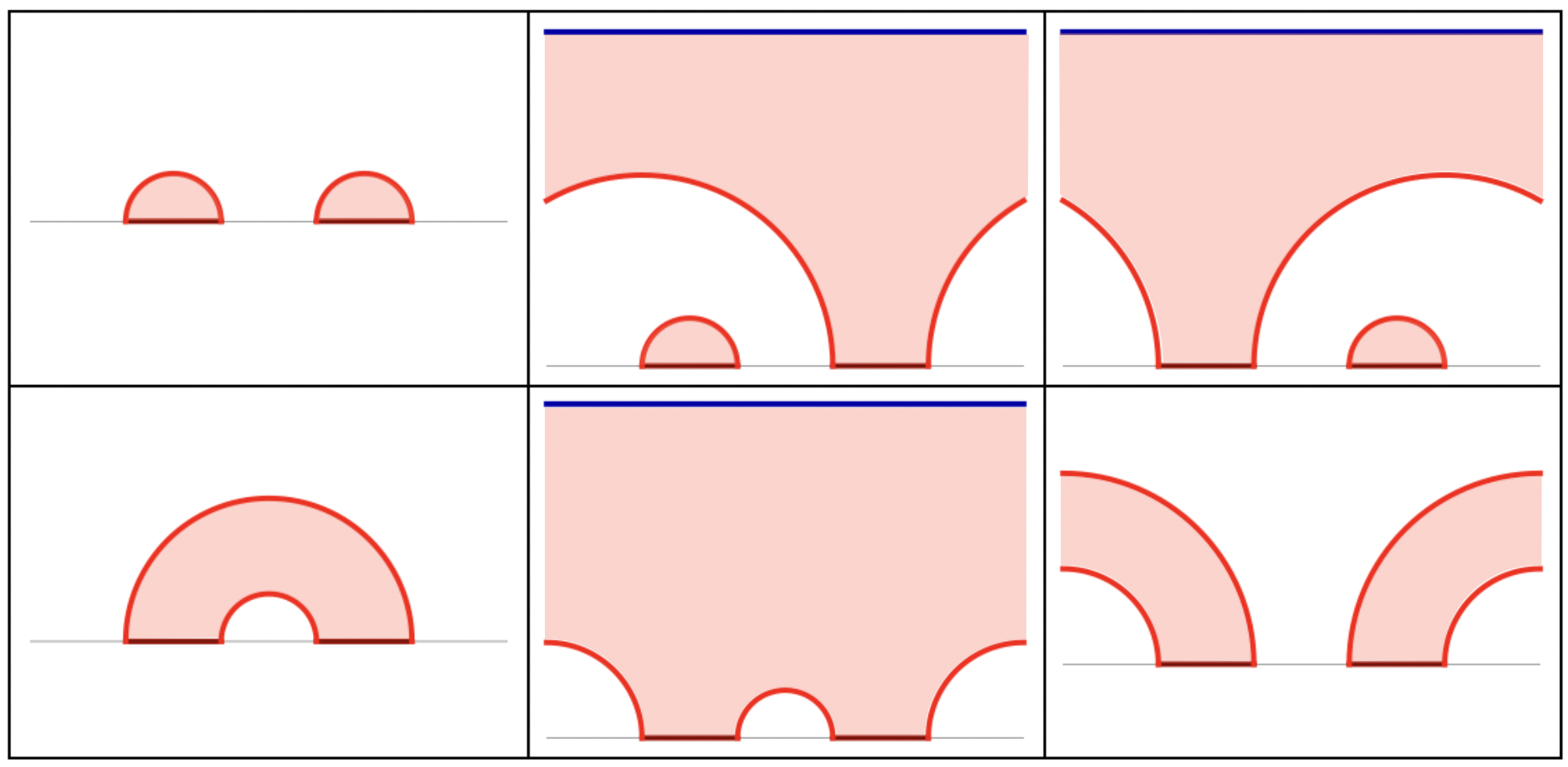}
    \caption{The six phases for the entanglement entropy $S(\mathsf{A}_1 \mathsf{A}_2)$ for two disconnected boundary intervals $\mathsf{A}_1$ and $\mathsf{A}_2$ in the BTZ geometry. We draw in red boundary anchored geodesics and in blue the horizon (if needed for the homology condition). The homology region is shaded in red. The left and ride edges of each panel are identified.}
    \label{fig:phases}
\end{figure}

We show in figure \ref{fig:BTZ} the two phases in the fundamental domain and the quotient. We will also be interested in computing the entanglement entropy for two or more disconnected boundary regions. By the same logic as above, for any two points on the boundary the only two competing geodesics will be the ones discussed above. Then, the entanglement entropy for more than one boundary region is found by listing all the possible ways of connecting the endpoints with two types of geodesics, taking into account the homology constraint. We do this combinatorial computation in \emph{Mathematica} and find that for $\mathsf{n}$ disconnected boundary intervals there are 
\be
\binom{2\mathsf{n}}{\mathsf{n}} = \frac{(2\mathsf{n})!}{(\mathsf{n}!)^2}
\ee
possible phases.\footnote{Actually, for $\mathsf{n} = 2$ where the boundary interval has endpoints $\mathbf{a}_1,\mathbf{a}_2,\mathbf{a}_3,\mathbf{a}_4$ there is one more phase where points $\mathbf{a}_1,\mathbf{a}_4$ are connected with a trivial geodesic and points $\mathbf{a}_2,\mathbf{a}_3$ are connected with a geodesic that wraps around the horizon (augmented with the horizon). While this phase has a self intersection it is still allowed topologically. However, it never dominates in the computation of entanglement entropy. Including it does not hurt, but we decided to list in figure \ref{fig:phases} only the phases that dominate. An analogous story holds when computing entropies for more than two disconnected regions.} For example, for $\mathsf{n} = 2$, there are six allowed phases, as shown in figure \ref{fig:phases}.

\subsubsection{Numerics}

To test the inequalities \eqref{eq:dihedral} numerically we need to first define the set of boundary regions. We want the boundary regions not to lie on a constant time slice in order to faithfully test the inequalities in the HRT regime. In particular, we will consider the following three types of configurations: 
\begin{enumerate}
    \item[a.] All the regions $\mathsf{A}_1,\mathsf{A}_2,\dots,\mathsf{A}_{2k+1}$ are single intervals covering one entire boundary (not in order) with no gaps. See figure \ref{fig:regions}.
    \item[b.] All the regions $\mathsf{A}_1,\mathsf{A}_2,\dots,\mathsf{A}_{2k+1}$ are single intervals covering one entire boundary in that order with a gap between the first and last one.
    \item[c.] The regions $\mathsf{A}_1,\mathsf{A}_2,\dots,\mathsf{A}_{2k}$ are single intervals covering one entire boundary in that order with a gap between the first and last one and region $\mathsf{A}_{2k+1}$ covers the entire second boundary.
\end{enumerate}
The rationale for fixing the order in case b is that generically this corresponds to the worst-case scenario, in the sense that the quantity $\Delta=\text{LHS}-\text{RHS}$ takes the smallest value when the HRT surfaces from the LHS and RHS are positioned maximally closely and deeply in the bulk, which for the form \eqref{eq:dihedral} happens when the regions are ordered.  Similar comment applies in case c, where we additionally induce the HRT surfaces to probe the topology more effectively. Therefore, while cases a, b, and c do not cover all the possibilities, we find it highly unlikely that a counterexample would be found elsewhere if none is found in these cases.

\begin{figure}
    \centering
    \includegraphics[width=\textwidth]{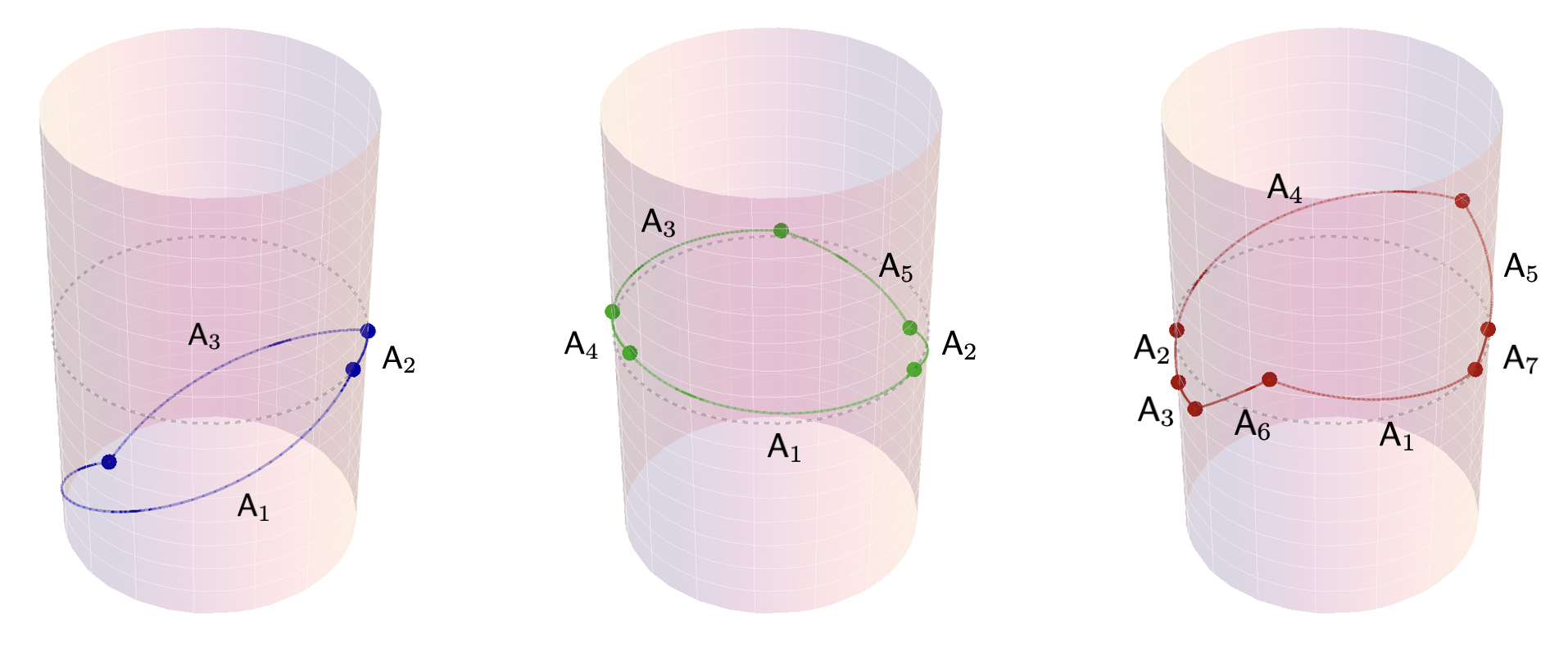}
    \caption{Examples of configurations of boundary regions for (i) $\mathsf{k} = 1$, (ii) $\mathsf{k} = 2$ and (iii) $\mathsf{k} = 3$. All three configurations are of Type a --- the regions cover one entire asymptotic boundary and the regions are not in order (except, of course, for $\mathsf{k} = 1$). The dashed gray line represents the $t = 0$ slice. All the regions lie on a boundary Cauchy slice.}
    \label{fig:regions}
\end{figure}

The code is structured as follows:
\begin{enumerate}
    \item To test the dihedral inequality \eqref{eq:dihedral} for a given value of $\mathsf{k}$, it outputs a random set of either $2\mathsf{k}+1$ (cases a and c) or $2\mathsf{k}+2$ (case b) spacelike-separated points, by first generating the $\phi$-values with a uniform probability distribution in the interval $(0,2\pi)$, then picking $t$-values by multiplying each coordinate by a random number in the interval $(-1,1)$ and finally checking that the resulting set of points lies on a boundary Cauchy slice (if not, it keeps repeating the same process until it does). These points will define the endpoints of the regions. For type-a regions, the code further selects a random element of the permutation group $S_{2\mathsf{k}+1}$ and permutes the labels of the regions so that they are not in order. See figure \ref{fig:regions} for an example. To test all the other primitive HEIs, the code is fed an entropy vector corresponding to a specific information quantity. It then generates a random set of 7 regions (6 regions + one gap) in the same manner as above, with the labels once again permuted.
    \item Once the boundary regions are defined, the code selects left and right moving temperatures $T_{\mathsf{L}}$ and $T_{\mathsf{R}}$  from a log-normal distribution with mean $\mu = 0$ and standard deviation $\sigma = 2$. We further truncate the distribution (looking at black holes with radius $\leq 10$) to avoid selecting black holes with too large of a horizon, as explained below. The random pair $\{T_{\mathsf{L}},T_{\mathsf{R}}\}$ uniquely defines the spacetime.
    \item All the required entropies for the terms entering the inequality are then computed, and the program stores the following array of information about the trial
    \be
    \{ \{T_{\mathsf{L}}, T_{\mathsf{R}}\}, \textsc{points}, \Delta\},
    \ee
    where $\textsc{points}$ is an array containing the coordinates of the endpoints defining the boundary regions and $\Delta = \text{LHS}-\text{RHS}$ is the value of the inequality. For type-a regions we also store the element of the permutation group used to shuffle the regions. 
    \item The code repeats steps (1)--(3) for $n$ independent trials.
\end{enumerate}

\begin{figure}
    \centering
    \includegraphics[width=\textwidth]{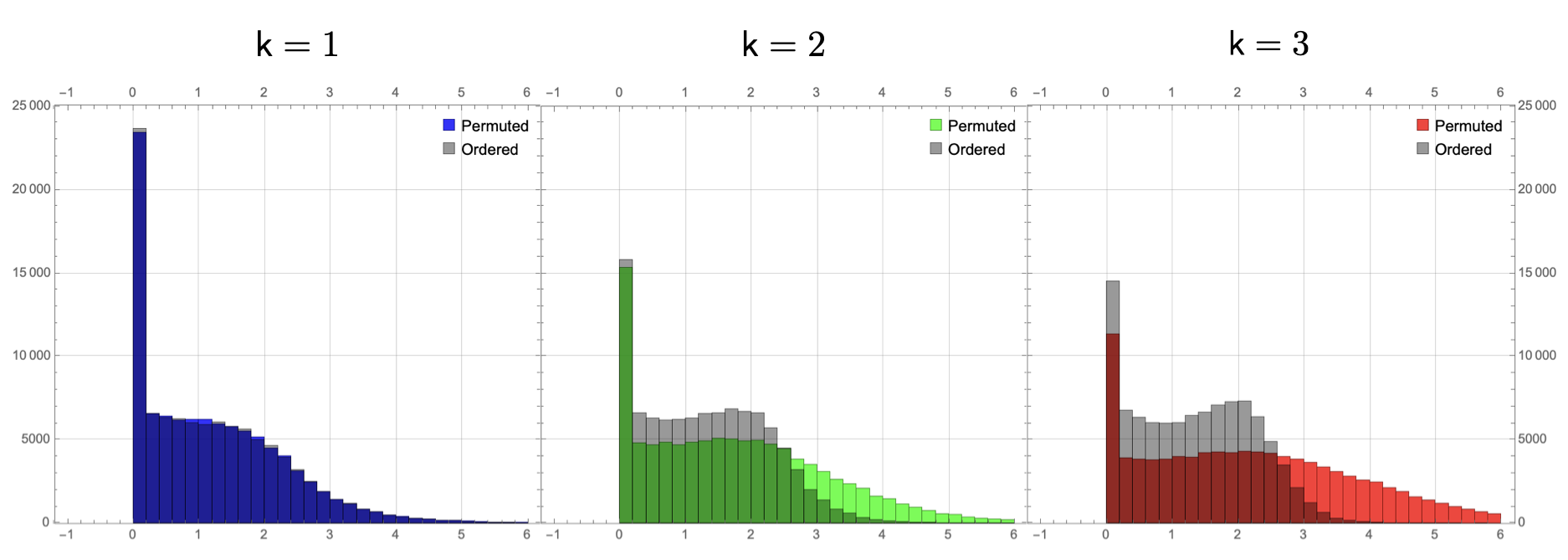}
    \caption{Distribution of $\Delta = \text{LHS}-\text{RHS}$ from $\mathsf{k} = 1$ to $\mathsf{k} = 3$. All boundary regions are of type a and the bulk is BTZ. In each plot we show the distribution for permuted (colored) and ordered (gray) regions. For $\mathsf{k} = 1$ there is, of course, no difference but for $\mathsf{k} = 2,3$ the ordered configuration has smaller mean.} 
    \label{fig:histo}
\end{figure}

We performed the following tests:
\begin{itemize}
  \item For dihedral inequalities:
  \begin{itemize}
    \item For type a regions, we collected $n = 10^5$ trials for all dihedral inequalities up to  $\mathsf{k} = 5$ (corresponding to $\mathsf{N}=11$ parties). For the first three dihedral inequalities we also ran trials without permuting the regions, and we show the results in figure \ref{fig:histo}. This confirms pictorially that the ordered case is statistically closer to saturation.
    \item For both type b and type c regions, we collected $n = 10^5$ trials for all dihedral inequalities up to $\mathsf{k} = 10$.
  \end{itemize}
  \item For the 6-party inequalities obtained in \cite{Hernandez-Cuenca:2023iqh}, we ran $n = 10^5$ trials for every inequality up to $Q_{10}$ (plus $Q_{22}$ and $Q_{39}$) and $n =  10^3$ trials for all the remaining 1,865 HEIs. We choose to show in figure \ref{fig:histo2} the inequalities $Q_{12},Q_{22}$ and $Q_{39}$ as they are good representatives for the different structure of the 6-party HEIs, as can be seen when written in a compact form using tripartite information and conditional tripartite information:
\begin{align}\label{eq:6-party_ex}
\begin{split}
Q_{12} &= -I_3(A:BE:CF)-I_3(B:C:D|A)
\\
Q_{22} &= -I_3(A:BD:CE)-I_3(AC:B:EF)-I_3(C:D:F|AB)
\\
Q_{39} &= -I_3(A:CF:DE)-I_3(B:CE:DF)-I_3(C:D:E)-I_3(A:B:F|E)
\end{split}
\end{align}
  \end{itemize}

  \begin{figure}
    \centering
    \includegraphics[width=\textwidth]{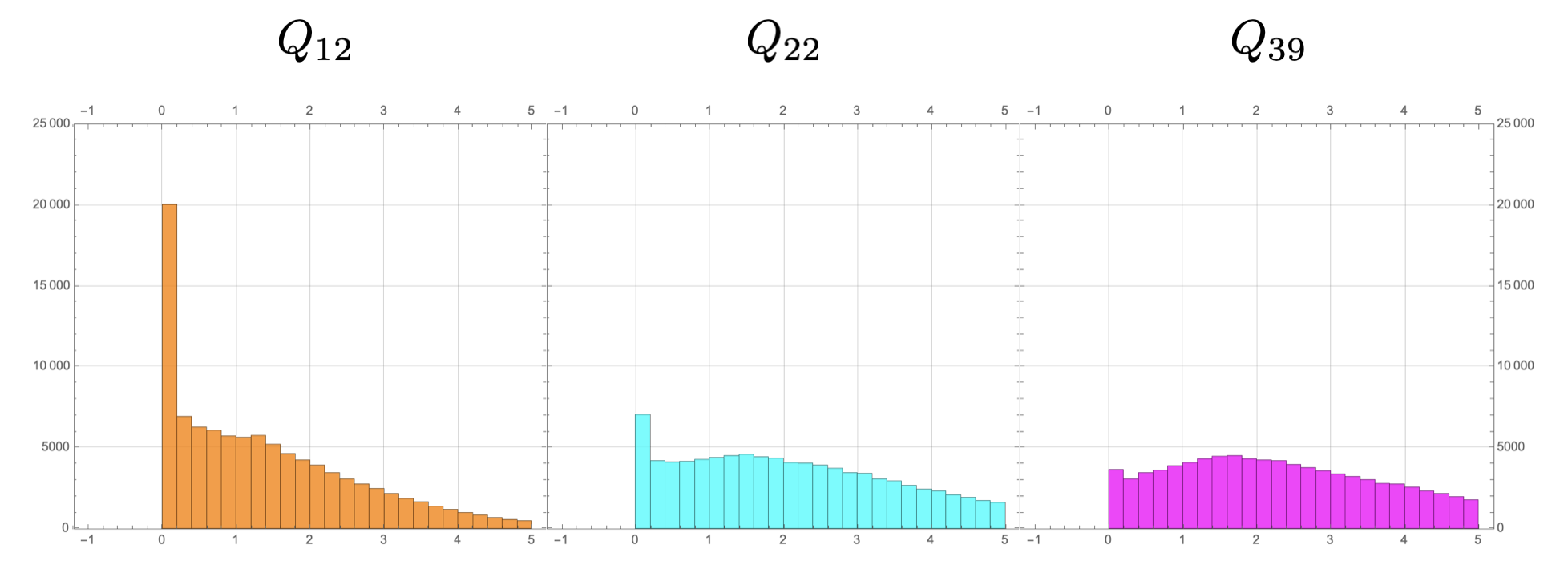}
\caption{Distribution of $\Delta = \text{LHS}-\text{RHS}$ for the inequalities $Q_{12}, Q_{22}$ and $Q_{39}$ written explicitly in \eqref{eq:6-party_ex}. All boundary regions are of Type b and the bulk is a rotating BTZ.} 
    \label{fig:histo2}
\end{figure}

No counterexample was found. Furthermore, as can be seen from figures \ref{fig:histo} and \ref{fig:histo2}, note that there is an accumulation of configurations that saturate or nearly saturate the inequalities. This suggests that a counterexample --- if it existed --- would have plausibly been found. 

We also note that, as remarked above, configurations where the subregions are in order on the boundary have smallest average $\Delta$. This is a particularly helpful fact since for unpermuted configurations all the entropies appearing in an inequality are of single connected intervals, making the numerics not only simpler but also much faster.

It is also interesting to study the nature of the saturating configurations, as they are the best starting point to find counterexamples. This amounts to looking at which phase dominates for each term entering the inequality. We do this for MMI ($\mathsf{k} = 1$) and the 5-party dihedral ($\mathsf{k} = 2$), and find that the configurations can be classified into two groups. The first and most common (accounting for about $99$\% of cases) happens when the LHS and the RHS surfaces are the same, hence cancelling out. If the boundary regions are ordered, this happens when a region $\mathsf{A}$ on the LHS and its complement $\mathsf{A}^c$ on the RHS share the same phase (modulo an extra horizon). If further the phases are such that one has the same number of horizons on the LHS and RHS, we have an exact cancellation. Otherwise the inequality will be an exact even multiple of the horizon's length, i.e.
\be
\Delta = 2m\mathsf{h}\,, \qquad m = 1,2\dots,\mathsf{k}\,.
\ee
We note that these type of saturating configurations are robust to perturbations of the geometry (e.g.\ if matter is added) and therefore suggest that our numerics should hold for an open neighborhood of spacetimes near these pure AdS$_3$ quotients. This intuition will later be confirmed by the argument in section \ref{sec:argument}, where we do not specialize to vacuum solutions.

The second and rarer class is when the black hole is very large. In this case, the second phase never dominates and all the geodesics are homotopic to the regions. All the LHS surfaces wrap around the horizon, compensating for the horizon term on the RHS. We note that this class of configurations does not saturate the inequality exactly, but it does so with an error that gets smaller as the horizon size increases. For this reason, we don't consider this second class of configurations to be of any physical importance and it is easily removed by adding a cutoff when sampling right and left moving temperatures. The fact that none of the saturating or near-saturating configurations lend themselves to over-saturation under small deformations (in particular by utilizing time dependence) gives further evidence that HRT cone
coincides with the RT cone.

\subsection{(3,0)-wormholes}
\label{sec:(3,0)}

The next spacetime we consider is the (rotating) 3-boundary wormhole, a geometry where the spatial slice has the topology of a pair of pants. The CFT interpretation of this state is of a pure entangled state of three circles.

\subsubsection{AdS$_3$ quotient}

The spatial geometry is conformally equivalent to a three-punctured sphere, whose fundamental group is the free group of rank 2. Therefore, the quotient $\Gamma$ of a (3,0)-wormhole has two generators, i.e.\ $\Gamma = \langle g_1, g_2 \rangle$.

\begin{figure}
    \centering
    \includegraphics[width=.45\textwidth]{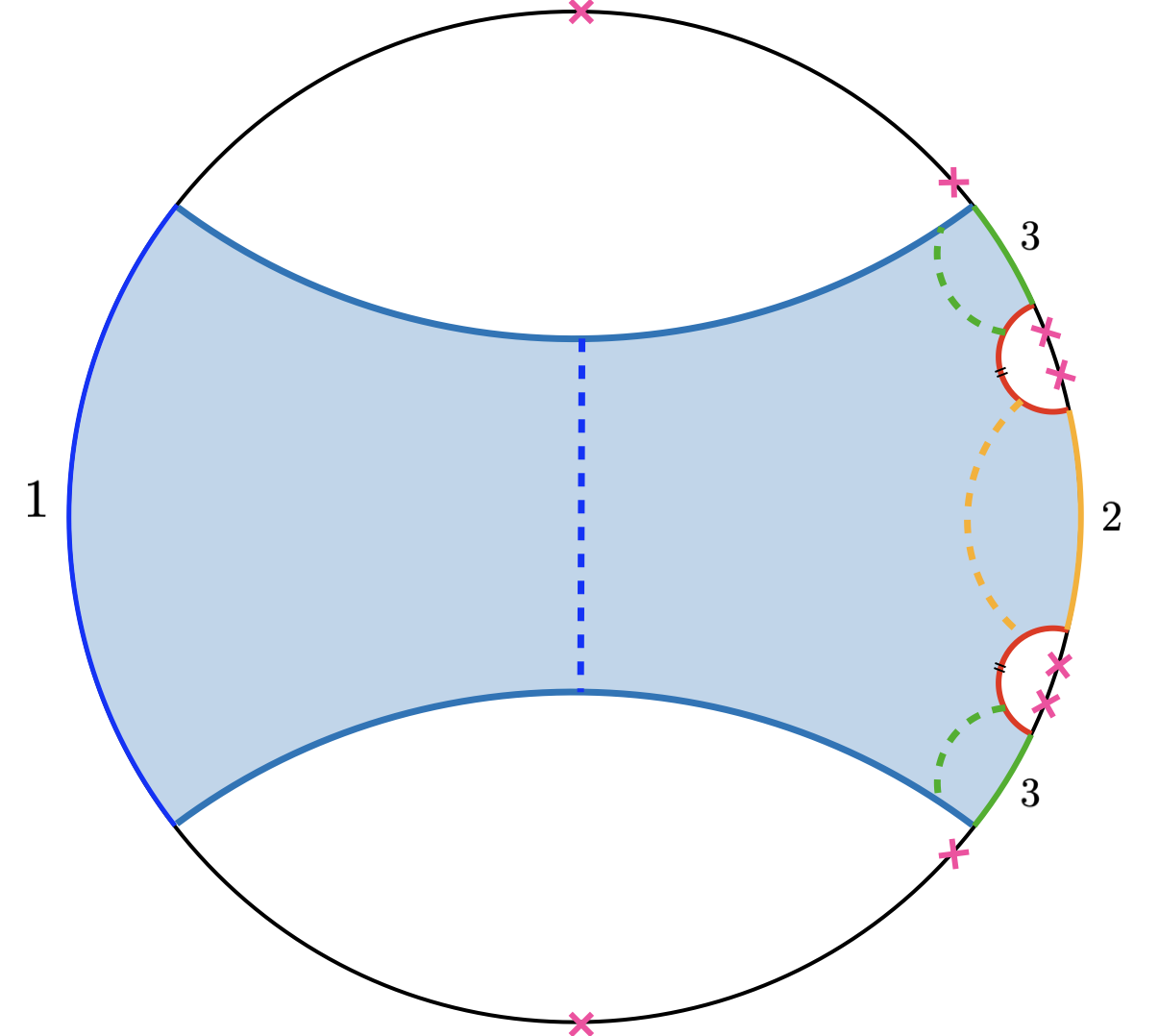}
    \caption{Fundamental domain of the 3-boundary wormhole. The pair of blue and red geodesics are to be identified under the action of the two generators $g_1$ and $g_2$. The three horizons are contained inside the geodesics connecting the pairs of fixed points (shown as pink crosses on the covering space boundary) of $g_1$, $g_2$ and $g_1 g_2^{-1}$.}
    \label{fig:3bdy-covering}
\end{figure}

The static case was carefully studied in \cite{Maxfield:2014kra}. The generalization to the rotating case is relatively straightforward and we will present it here. To obtain a spatial slice with the topology of a pair of pants, it is simplest to start with a fundamental region as shown in figure \ref{fig:3bdy-covering} where we glue two pairs of geodesics. The action of the two generators $g_1$ and $g_2$ identifies the two pairs. For the choice of fundamental domain we have made, one of the spacelike vectors can be chosen to be the same as the one for BTZ. The other can be obtained by boosting this vector in the $z$ direction (which we recall was along the $x$ axis; see \eqref{eq:lie-algebra}) with an angle $\alpha$, i.e.
\be
\xi_{1}^{\mathsf{L},\mathsf{R}} = \ell_{1}^{\mathsf{L},\mathsf{R}}\begin{pmatrix}
0 & 1\\
1 & 0
\end{pmatrix}, \qquad 
\xi_{2}^{\mathsf{L},\mathsf{R}} = \ell_{2}^{\mathsf{L},\mathsf{R}}\begin{pmatrix}
0 & e^{\alpha_{\mathsf{L},\mathsf{R}}}\\
e^{-\alpha_{\mathsf{L},\mathsf{R}}}& 0
\end{pmatrix}.
\ee
Since the wormhole is rotating, there are a total of six moduli that fully characterize the geometry: two pairs of three moduli for the left and right components. The four parameters $\ell_1^{\mathsf{L},\mathsf{R}}$ and $\ell_2^{\mathsf{L},\mathsf{R}}$ are analogous to the left and right moving temperatures in the rotating BTZ example, and are related to the lengths of the horizons for the first and second asymptotic regions. As with any quotient, each asymptotic region is BTZ-like and will have a generator of translations associated with it. The elements associated to the first and second horizons are $g_1$ and $g_2$ (as can be checked from their fixed points), and so the $\xi_1$ and $\xi_2$ generate translations on the first and second boundaries. Indeed, the length of the two horizons can be computed to be
\be
\mathsf{h}_1 = \cosh^{-1}\left(\frac{\Tr g_1^{\mathsf{L}}}{2}\right) + \cosh^{-1}\left(\frac{\Tr g_1^{\mathsf{R}}}{2}\right)  = \ell_1^\mathsf{L} + \ell_1^\mathsf{R}
\ee
for the first horizon, and 
\be
\mathsf{h}_2 = \cosh^{-1}\left(\frac{\Tr g_2^{\mathsf{L}}}{2}\right) + \cosh^{-1}\left(\frac{\Tr g_2^{\mathsf{R}}}{2}\right)  = \ell_2^\mathsf{L} + \ell_2^\mathsf{R}
\ee
for the second horizon. The corresponding element for the third horizon is $g_1 g_2^{-1}$, and the corresponding Killing vector $\xi_3$ can be found from solving the matrix equation $e^{2\pi \xi_3} = g_1 g_2^{-1}$. With the current choice of parameters the information about the third horizon is hidden inside the boost angles $\alpha_{\mathsf{L},\mathsf{R}}$. To obtain the length of the third horizon we can use the formula for the length of closed geodesics, which gives us
\begin{align}\label{eq:3rdhorizon}
\mathsf{h}_3 &= \cosh^{-1}\left[\frac{\Tr(g_1^{\mathsf{L}}(g_2^{\mathsf{L}})^{-1})}{2}\right] + \cosh^{-1}\left[\frac{\Tr(g_1^{\mathsf{R}}(g_2^{\mathsf{R}})^{-1})}{2}\right]\\
&= \sum_{\mathsf{I}\, = \,\{\mathsf{L},\mathsf{R}\}}\cosh^{-1}\left(\cosh{\alpha_{\mathsf{I}}}\sinh\ell_1^{\mathsf{I}}\sinh\ell_2^{\mathsf{I}}-\cosh\ell_1^{\mathsf{I}}\cosh\ell_2^{\mathsf{I}}\right).
\end{align}
It is important to note that the element $g_1 g_2^{-1}$ will be a valid hyperbolic element only when its trace is larger than two. This imposes a lower bound for the values of the angles $\alpha_\mathsf{L}$ and $\alpha_\mathsf{R}$, or equivalently, it imposes that $\mathsf{h}_3 > 0$. Because of this, we will from now on define the geometry by the six positive left/right lengths $(\ell_1^\mathsf{L}, \ell_1^\mathsf{R}, \ell_2^\mathsf{L}, \ell_2^\mathsf{R}, \ell_3^\mathsf{L}, \ell_3^\mathsf{R})$, with $\ell_3^{\mathsf{L},\mathsf{R}}$ being the first and second terms in \eqref{eq:3rdhorizon}.

\subsubsection{Entanglement entropy}

\begin{figure}
    \centering
    \includegraphics[width=0.7\textwidth]{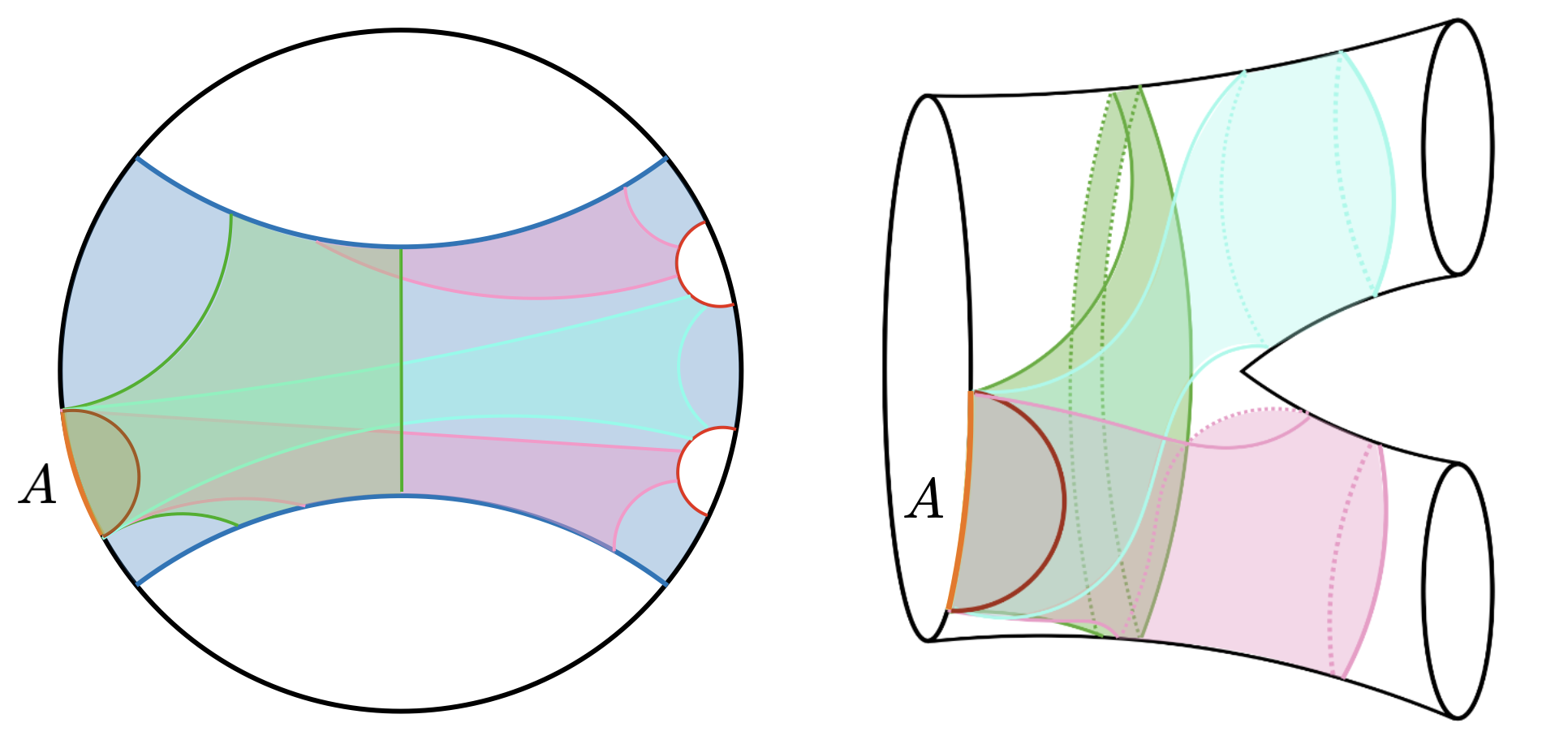}
    \caption{\textbf{[Left]} The four phases of entanglement entropy for the boundary region $\mathsf{A}$ in the fundamental domain, with the  homology regions for each phase shaded. In red we have the geodesic homotopic to the interval. In green, cyan and pink we show the geodesics in the $g_{1}^{-1}, g_2^{-1}$ and $g_2g_1^{-1}$ homotopy classes augmented with the first, second, and third horizons respectively. The green geodesic may be augmented with the union of the second and third horizons if their total length is smaller.  \textbf{[Right]} The same four phases on a spatial slice of the quotient.}
    \label{fig:3bdy-quotient}
\end{figure}

Consider a region $\mathsf{A}$ on the first boundary. Unlike the two phases for BTZ which depended only on the sizes $\Delta x$ and $\Delta t$ of the interval, for the 3-boundary wormhole the phases will depend on the position of the interval itself. Because of this, we must carefully keep track of the endpoints of the region $\mathsf{A}$, call them $\mathbf{a}_1$ and $\mathbf{a}_2$ with coordinates $(t_1,\phi_1)$ and $(t_2,\phi_2)$ on the quotient's boundary. To compute the geodesic length using \eqref{eq:lengthQuotient}, we must find the location of $\mathbf{a}_1$ and $\mathbf{a}_2$ on the fundamental domain in the covering space. To do that, we begin by defining a fixed representative point $\mathbf{p}$ for the first asymptotic region, the simplest choice being the midpoint $t = 0$ and $\phi = \pi$ on the fundamental domain. Then, we translate $\mathbf{p}$ in space and time using the appropriate Killing vectors to obtain $\mathbf{a}_1$ and $\mathbf{a}_2$, as follows
\be
\mathbf{a}_1 = e^{(\phi_1 + t_1)\, \xi_1^{\mathsf{L}} }\mathbf{p}\,e^{(\phi_1 - t_1)\, \xi_1^{\mathsf{R}}}, \quad \mathbf{a}_2 = e^{(\phi_2 + t_2)\, \xi_1^{\mathsf{L}} }\,\mathbf{p}\,e^{(\phi_2 - t_2)\, \xi_1^{\mathsf{R}}}
\ee
where, we recall, the sign difference for the time translation is there because time translations act oppositely for the left and right components. We can then use \eqref{eq:lengthQuotient} to compute the geodesic lengths $\ell(\mathbf{a}_1, \mathbf{a}_2, \gamma)$ in the fixed endpoint homotopy class of some $\gamma\in\Gamma$. As in the BTZ example, we need to consider only those that can dominate the computation for the entanglement entropy. It was shown in \cite{Maxfield:2014kra} that there are four such phases, and they are the geodesic in the homotopy classes of the identity, $g_1^{-1}$, $g_2^{-1}$ and $g_2g_1^{-1}$.  The first two are the same as the BTZ phases, with the only difference being that for the second phase one may satisfy the homology constraint by augmenting the geodesic with either the first horizon $\mathsf{h}_1$ or the union of the other two $\mathsf{h}_2 \cup \mathsf{h}_3$ if smaller. From an algebraic perspective including the closed geodesics $\mathsf{h}_2 \cup \mathsf{h}_3$ is allowed since they are in the conjugacy classes of $g_2$ and $g_1g_2^{-1}$, which, together with the boundary anchored geodesic in the homotopy class of $g_1^{-1}$ give the correct trivial homology (once abelianized). The more interesting phases are the third and fourth which are unique to the three boundary wormhole. They correspond to geodesics that wrap around the second and third leg, respectively. The homology condition is satisfied by augmenting them with the second and third horizons, respectively.  We show in figure \ref{fig:3bdy-quotient} the four phases both in the fundamental domain and in the quotient.

\subsubsection{Numerics}

For simplicity, we tested the dihedral family for the case where all the boundary regions $\mathsf{A}_1, \dots, \mathsf{A}_{2k+1}$ are single intervals (in that order) covering the first boundary with no gaps. This choice is supported by the evidence from the BTZ example, where we saw that permuting Type-a regions has the effect of increasing the average value of $\Delta = $ LHS$-$RHS. In this configuration every entropy appearing in any inequality is the entropy of a single connected interval.

While naively it would appear that such configuration does not faithfully test the inequalities for the 3-boundary wormhole (since all the regions are restricted to a single boundary which is asymptotically BTZ) it still offers a highly non-trivial check since, as established in the previous section, two of the four phases that compute the entropy of a single interval leave the exterior region of the boundary, therefore probing the interior geometry. In other words, the computation of the entropy knows about the full topology of the slice, i.e.\ the formulae depend on the six moduli of the pants.

\begin{figure}
    \centering
\includegraphics[width=\textwidth]{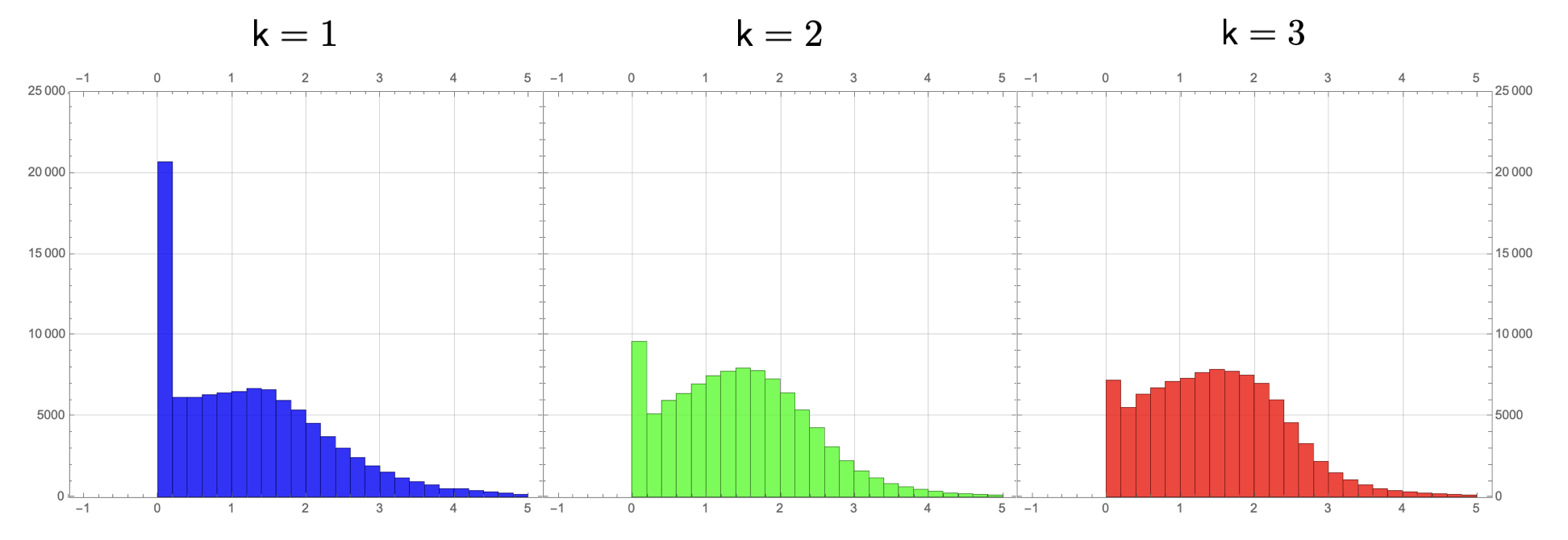}
    \caption{Distribution of $\Delta = \text{LHS}-\text{RHS}$ for the first three dihedral inequalities when the bulk is a rotating (3,0) wormhole. For each iteration in the code the six moduli $(\ell_1^\mathsf{L}, \ell_1^\mathsf{R}, \ell_2^\mathsf{L}, \ell_2^\mathsf{R}, \ell_3^\mathsf{L}, \ell_3^\mathsf{R})$ have been generated from a uniform random distribution in the range $(0,20)$.} 
    \label{fig:histo3bdy}
\end{figure}

Aside from the more complicated setup, the code is largely the same as the BTZ example. One difference worth mentioning is that since we are mostly interested in testing the inequalities in a regime where the phases unique to the (3,0)-geometry dominate, which happens when $\mathsf{h}_1\gg \mathsf{h}_{2,3}$ and $\mathsf{h}_1$ is larger than some order one number (see \cite{Maxfield:2014kra} for a detailed analysis), we  randomly generated each of the six moduli with a uniform probability distribution in the interval $(0,20)$. Similarly to the BTZ code, the program stores the following array of information about each trial
    \be
    \{ \{\ell_1^\mathsf{L}, \ell_1^\mathsf{R}, \ell_2^\mathsf{L}, \ell_2^\mathsf{R}, \ell_3^\mathsf{L}, \ell_3^\mathsf{R}\}, \textsc{points}, \Delta\}.
    \ee
and repeats the process for $n$ independent trials. We performed the following tests:
\begin{itemize}
    \item For the first three dihedral inequalities we collected $n = 10^5$ trials, and we show the results in figure \ref{fig:histo3bdy}.
    \item For each of the remaining dihedrals up to $\mathsf{k} = 20$ (corresponding to $\mathsf{N} = 41$ parties), we collected $n = 10^4$ trials.
\end{itemize} 
No counterexample was found.

\subsection{(1,1)-wormholes}
\label{sec:(1,1)}

We now move our attention to our third and final quotient: the (1,1)-wormhole, also known as the torus wormhole. This is a vacuum solution where the spatial slice has only one asymptotic boundary (the dual CFT state is pure) but with non-trivial topology behind the horizon (it has genus one).

\subsubsection{AdS$_3$ quotient}

The spatial geometry is conformally equivalent to a one-punctured torus, which shares the same fundamental group as a three-punctured sphere. Therefore, this geometry has the same quotient group $\Gamma$ as the (3,0)-wormhole: the free group of rank 2. So $\Gamma$ has two generators, i.e. $\Gamma = \langle g_1, g_2 \rangle$. The difference with the $(3,0)$ wormhole lies in how these generators act when gluing pairs of geodesics in the fundamental domain: in the (1,1)-wormhole opposite pairs of geodesics are identified, whereas in the (3,0)-wormhole the identification is performed on adjacent pairs.

In \cite{Maxfield:2014kra} a symmetric non-rotating version of this geometry was studied. There, the geometry was fully characterized by one modulus (the length of the horizon). However, in general the geometry is defined by 6 moduli (two pairs of three moduli for the left and right components), just like the (3,0)-wormhole. Below we will give a description of this geometry in generality but for our numerics we will stick to a rotating generalization of the symmetric solution studied in \cite{Maxfield:2014kra} (the same as the one appearing in the original paper \cite{Aminneborg:1998si}). 

\begin{figure}
    \centering
    \includegraphics[width=0.6\textwidth]{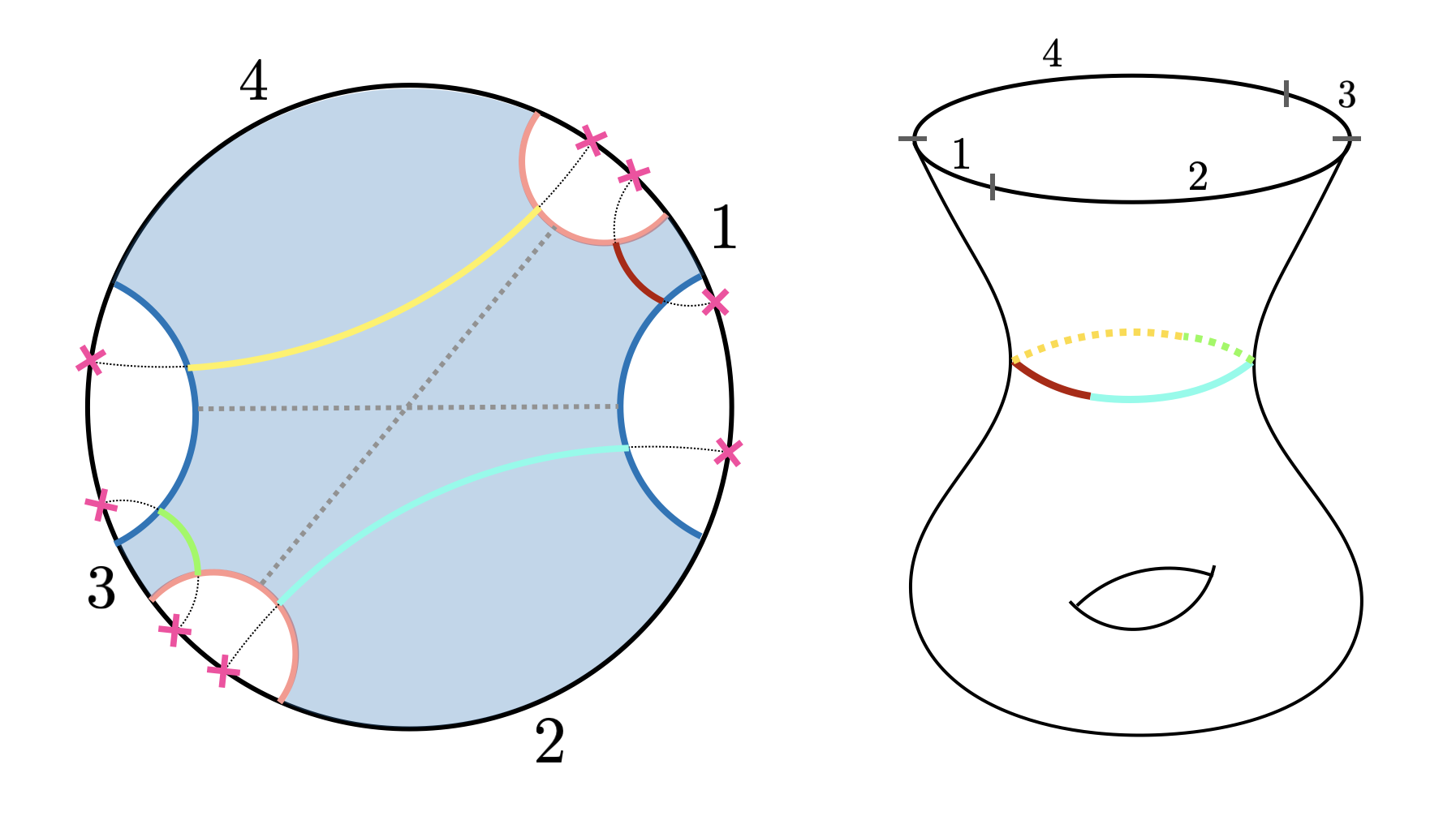}
    \caption{\textbf{[Left]} Fundamental domain for the torus wormhole. The two generators $g_1$ and $g_2$ enact the gluing of the blue and red pairs of geodesics respectively. Such identification will form a close cycle on the boundary, and thus the resulting quotient will have a single boundary and  a single horizon formed by taking the union of the four colored segments in the fundamental domain. The two dotted lines in gray are the closed geodesics associated to the two generators $g_1$ and $g_2$ which, in the quotient geometry, result in the two non-contractible loops for the torus behind the horizon. \textbf{[Right]} Resulting spatial slice of the quotient after we perform the identifications on the fundamental domain on the left.}
    \label{fig:torus-wormhole-quotient}
\end{figure}

Defining the fundamental domain as in figure \ref{fig:torus-wormhole-quotient}, one of the Lie algebra generators can be chosen to lie along the $z$-axis (identifying the pair of blue geodesics) and the other one by rotating the first with some angle $\alpha$ (identifying the pair of pink geodesics). Using \ref{eq:lie-algebra}, we have
\be
\xi_1^{\mathsf{L},\mathsf{R}} = \ell_1^{\mathsf{L},\mathsf{R}}\begin{pmatrix}
1 & 0\\
0 & -1
\end{pmatrix}\quad \text{and}\quad
\xi_2^{\mathsf{L},\mathsf{R}} = \ell_2^{\mathsf{L},\mathsf{R}}\begin{pmatrix}
\cos{\alpha_{\mathsf{L},\mathsf{R}}} & \sin{\alpha_{\mathsf{L},\mathsf{R}}}\\
\sin{\alpha_{\mathsf{L},\mathsf{R}}} & -\cos{\alpha_{\mathsf{L},\mathsf{R}}}
\end{pmatrix}.
\ee
One important comment to make right away is that neither $g_1$ nor $g_2$ are the elements in the conjugacy class corresponding to the closed geodesics that contain the horizon.  By following the identifications on the fundamental domain, one finds that the two generators are associated to the two non-contractible cycles of the torus behind the horizon, with lengths 
\be
\ell_1 = \cosh^{-1}\left(\frac{\Tr g_{1}^\mathsf{L}}{2}\right) + \cosh^{-1}\left(\frac{\Tr g_1^{\mathsf{R}}}{2}\right) = \ell_1^{\mathsf{L}} + \ell_1^{\mathsf{R}}
\ee
and
\be
\ell_2 = \cosh^{-1}\left(\frac{\Tr g_2^\mathsf{L}}{2}\right) + \cosh^{-1}\left(\frac{\Tr g_2^\mathsf{R}}{2}\right) = \ell_2^{\mathsf{L}} + \ell_2^{\mathsf{R}}
\ee
respectively. A consequence of this fact is that neither $\xi_1$ nor $\xi_2$ will correspond to the generators of translations on the asymptotic boundary, contrary to the (2,0) and (3,0)-wormholes studied before. 

The conjugacy class corresponding to the horizon can be found by noticing that the closed geodesic must have trivial homology, so the element must be trivial when abelianized. The simplest horizon element obeying this constraint is
\be
\upgamma_1 =g_1 g_2 g_1^{-1}g_2^{-1},
\ee
and it can be checked that in the fundamental domain the geodesic connecting the fixed points of this element will contain the horizon segment homologous to the first asymptotic region. The other three segments are found by conjugation and they are
\be
\upgamma_2 = g_2 g_1^{-1}g_2^{-1}g_1 , \,\,\, \upgamma_3 = g_1^{-1}g_2^{-1}g_1 g_2  \text{ and  }\upgamma_4 = g_2^{-1}g_1 g_2 g_1^{-1}.
\ee
There are of course infinitely many more such horizon words $\gamma\,  \upgamma_1 \gamma^{-1}$, which fill the fractal geometry in the covering space.

The horizon's length is then computed as usual:
\begin{align}
\mathsf{h} &= \cosh^{-1}\left(\frac{\Tr \upgamma_\mathsf{L}}{2}\right) + \cosh^{-1}\left(\frac{\Tr \upgamma_\mathsf{R}}{2}\right)\\
&= \sum_{\mathsf{I}= \{\mathsf{L},\mathsf{R}\}} \cosh^{-1}\left[\cosh^2{\ell_2^{\mathsf{I}}} + (\cos{2\alpha_{\mathsf{I}}}\sinh^2{\ell_1^{\mathsf{I}}} - \cosh^2{\ell_1^{\mathsf{I}}})\sinh^2{\ell_2^{\mathsf{I}}}\right].
\end{align}
Just like for the 3-boundary wormhole, imposing $\Tr\gamma_{\mathsf{L},\mathsf{R}}>2$ imposes a constraint on the angles $\alpha_{\mathsf{L},\mathsf{R}}$, which equates to the requirement $\mathsf{h}>0$. Hence, the six moduli can be chosen to be the six positive numbers $(\ell_1^\mathsf{L}, \ell_1^\mathsf{R}, \ell_2^\mathsf{L}, \ell_2^\mathsf{R}, \ell_3^\mathsf{L}, \ell_3^\mathsf{R})$.

\subsubsection{Entanglement entropy}

We will now discuss the computation of entanglement entropies in this quotient. As previously mentioned, we fix 
\be
\alpha_{\mathsf{L}} = \alpha_{\mathsf{R}} = \frac{\pi}{2}, \quad \ell_1^{\mathsf{L}} = \ell_2^{\mathsf{L}},\quad \ell_1^{\mathsf{R}} = \ell_2^{\mathsf{R}}
\ee
so the remaining two moduli $(\ell_2^{\mathsf{L}},\ell_2^{\mathsf{R}})$ define a rotating (1,1)-wormhole where the two non-contractible circles of the torus behind the horizon are equal, with lengths $\ell_2^{\mathsf{L}} + \ell_2^{\mathsf{R}}$. The geometry is well defined by demanding the elements to be hyperbolic, which is equivalent to the constraint $\ell_2^{\mathsf{L},\mathsf{R}} > \sinh^{-1}(1)$.

Let $\mathsf{A}$ be a boundary region with endpoints $\mathbf{a}_1$ and $\mathbf{a}_2$ with coordinates $(t_1,\phi_1)$ and $(t_2,\phi_2)$ on the quotient's boundary. To compute the length of a geodesic as we have been doing so far, we need to find the image of these two points in the covering space, which can be done by translating some representative point $\mathbf{p}$ using the appropriate Killing vectors
\be
\mathbf{a}_1 = e^{(\phi_1 + t_1)\, \xi_1^{\mathsf{L}} }\mathbf{p}\,e^{(\phi_1 - t_1)\, \xi_1^{\mathsf{R}}}, \quad \mathbf{a}_2 = e^{(\phi_2 + t_2)\, \xi_1^{\mathsf{L}} }\,\mathbf{p}\,e^{(\phi_2 - t_2)\, \xi_1^{\mathsf{R}}}.
\ee
Note, however, that for the (1,1)-wormhole the choice for the representative is a bit more non-trivial since the boundary is composed by four disconnected components in the fundamental domain. This is not a problem, it suffices to chose one boundary component and translate the point $\mathbf{p}$ accordingly to find $\mathbf{a}_1$ and $\mathbf{a}_2$, however one must be careful to use the correct generator of translations associated to the particular boundary component chosen, which can be found by solving the corresponding matrix equations, i.e.
\be\label{eq:matrix-eq}
\xi_i = \frac{1}{2\pi} \log(-\upgamma_i).
\ee
For the symmetric (1,1)-wormhole in question, the representative points are simple and they correspond to the points located at $\phi = \pi/4, 3\pi/4, 5\pi/4$ and $7\pi/4$ on the covering space boundary. For a generic (1,1)-wormhole this is no longer true and the representative points must be found case by case depending on the choice of the six moduli. This can be done by computing the eigenvectors of the appropriate horizon element, which gives the two fixed points. The point $\mathbf{p}$ can be defined as the midpoint of the two fixed points, as explained in \cite{Maxfield:2014kra}.

\begin{figure}
    \centering
    \includegraphics[width=\textwidth]{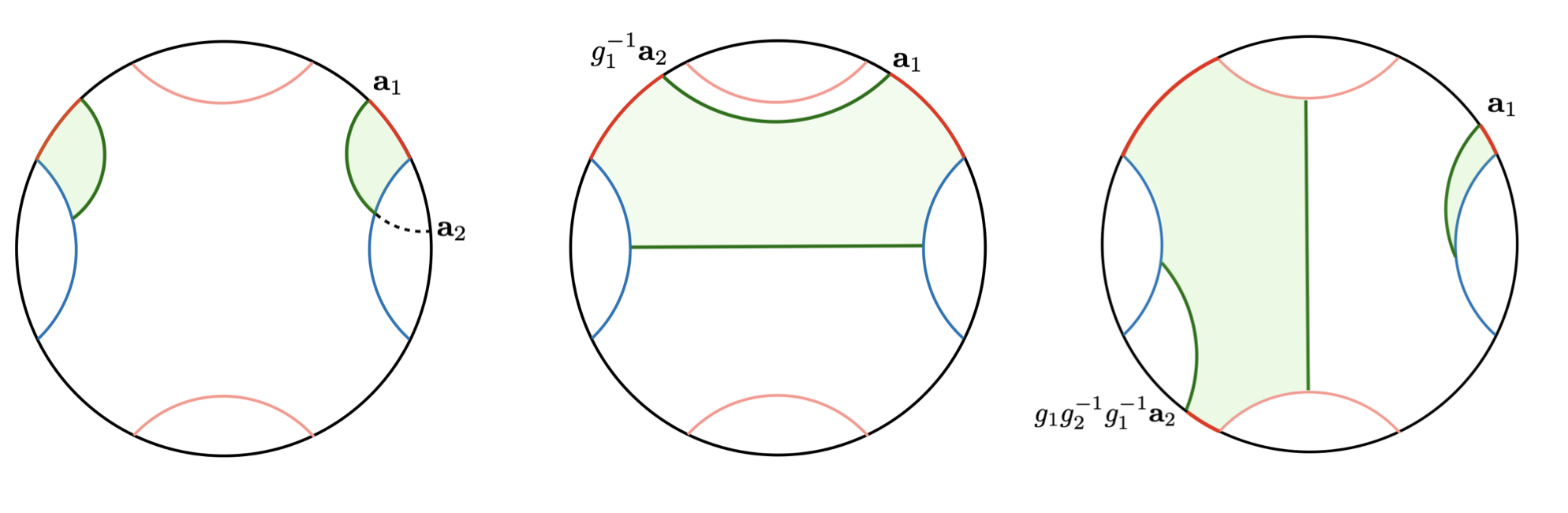}
    \caption{Three possible phases for the computation of $S(\mathsf{A})$ when $\mathbf{a}_1$ is in the first asymptotic component and $\Delta \phi < \pi$. For all drawings, we show the boundary region in red, the HRT surface in green with its homology region shaded. \textbf{[Left]} This is the trivial phase, identical to BTZ. As shown in the figure, it dominates when the boundary region $\mathsf{A}$ is small enough (typically whenever $\Delta\phi < \pi/2$). \textbf{[Center]} Phase where the dominant geodesic is in the homotopy class of $g_1^{-1}$, and as such we augment it with the closed geodesic in the conjugacy class of $g_1$. This phase dominates for large $\Delta \phi$ and $\mathbf{a}_2$ in the second asymptotic component. \textbf{[Right]} Phase where the dominant geodesic is in the homotopy class of $g_1 g_2^{-1} g_1^{-1}$. The geodesic must be augmented by the closed geodesic in the conjugacy class of $g_2$ (as can be seen by abelianizing the previous word). This phase dominates whenever $\Delta \phi$ is large enough and $\mathbf{a}_2$ is in the third asymptotic component. The horizon must be large enough for the two non-trivial phases to ever dominate.}
    \label{fig:phasesTorus}
\end{figure}

Similarly to the (3,0)-wormhole, the dominant group elements that compute $S(\mathsf{A})$ will depend not only on the size of $\mathsf{A}$ but on its position too (as well as on which connected component the computation was choosen to be performed). We find that even in the presence of rotation, the dominant phases agree with the one found in \cite{Maxfield:2014kra}.  We show in figure \ref{fig:phasesTorus} three possibilities for a region $\mathsf{A}$ with the endpoint $\mathbf{a}_1$ inside the first asymptotic component.

\subsubsection{Numerics}

Unlike the (3,0) wormhole case, it is not sufficient to test the inequalities for the simplest configuration of ordered regions with no gap. This is because the CFT state corresponding to the (1,1) geometry is pure, and any dihedral inequality identically vanishes by the property $S(A) = S(A^c)$. Considering MMI as an example, using purification symmetry we have $S(AB) + S(BC) + S(AC) = S(C) + S(A) + S(B)$.
Combinining it with $S(ABC) = 0$, we see that \eqref{eq:mmi} is saturated.

Therefore, we ran the numerics for Type-b regions, i.e.\ configurations of ordered regions with a gap between the first and last. This means that all terms appearing in any dihedral inequality correspond to the entanglement entropy of either one or two intervals.

\begin{figure}
    \centering
    \includegraphics[width=\textwidth]{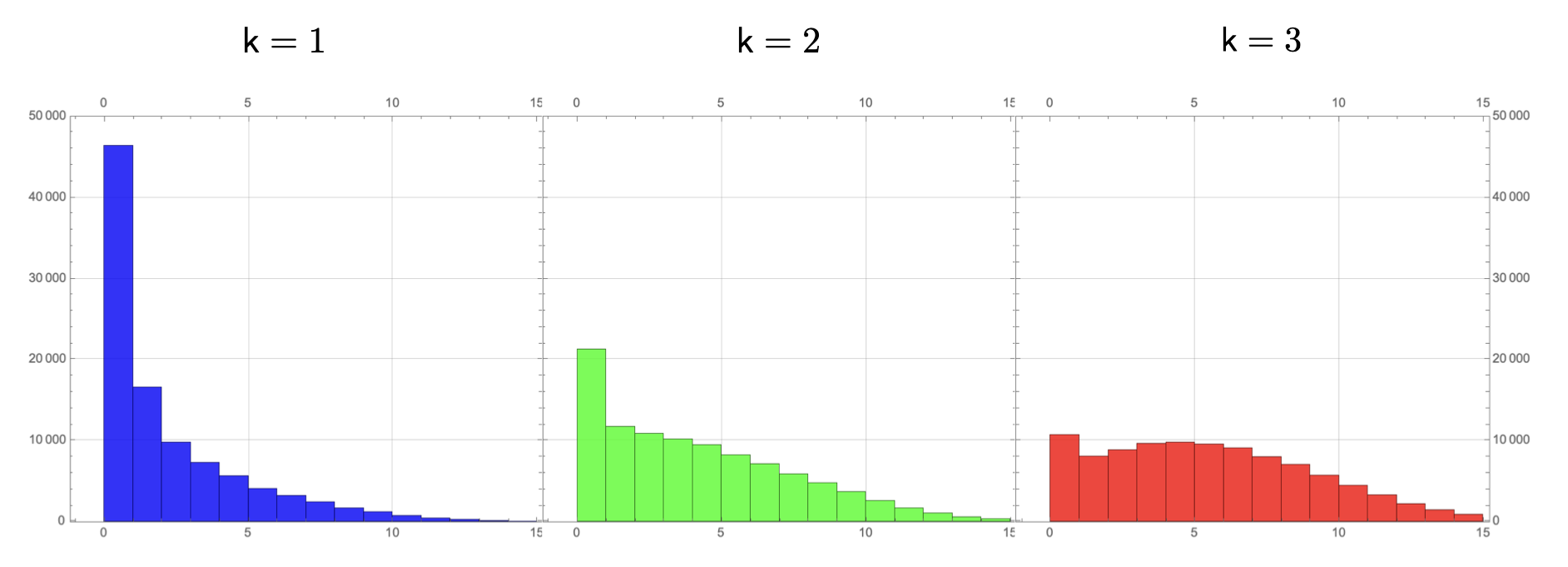}
    \caption{Distribution of $\Delta = $LHS-RHS for the first three dihedral inequalities when the bulk is a rotating symmetric (1,1) wormhole. For every iteration in the code the two moduli $(\ell_2^{\mathsf{L}},\ell_2^\mathsf{R})$ have been generated randomly in the range $(3.5,5)$.}
    \label{fig:histoTorus}
\end{figure}

To faithfully test the inequalities for this geometry, we want to make sure we sample the part of moduli space for which non-trivial phases exist, for otherwise one would simply be testing the inequalities for the BTZ case. This happens for large enough values of $\ell_2^{\mathsf{L},\mathsf{R}}$ (see \cite{Maxfield:2014kra} for a detailed analysis), so we randomly sample the moduli with a uniform distribution in the interval $(3.5,5)$ where we have the biggest chance of hitting non-trivial phases. It is worth mentioning that the code for the (1,1)-wormhole runs much slower than the previous two examples since the program needs to solve the matrix equation \ref{eq:matrix-eq} for each single trial. As with the previous two examples, the program stores the following array of information about each trial
    \be
    \{ \{ \ell_2^\mathsf{L}, \ell_2^\mathsf{R}\}, \textsc{points}, \Delta\}.
    \ee
and repeats the process for $n$ independent trials. We performed the following tests:
\begin{itemize}
    \item For the first three dihedral inequalities we collected $n = 10^5$ trials, and we show the results in figure \ref{fig:histoTorus}.
    \item For each of the remaining dihedrals up to $\mathsf{k} = 10$ (corresponding to $\mathsf{N} = 21$ parties + 1 gap), we collected $n = 10^4$ trials.
\end{itemize} 
No counterexample was found.

\section{General argument}
\label{sec:argument}

In this section we present an argument,\footnote{We use the word ``argument'', rather than ``proof'', as we will not attempt to be rigorous (although we're not aware of any reason that our argument couldn't be made rigorous).} for bulk spacetimes of type $(2,0)$, i.e.\ with spatial topology of a cylinder, that HRT entropies obey any RT inequality. Here we are not restricting ourselves to vacuum solutions, so the argument subsumes but goes beyond the numerical results for BTZ spacetimes described in subsection \ref{sec:BTZ} above. The argument builds on Czech-Dong's theorem that any RT inequality is obeyed by HRT if the bulk is simply connected \cite{Czech:2019lps}. As they suggested, to generalize that theorem to the case where the bulk is multiply connected, we pass to its universal covering space. However, whereas they considered the full lift of each region to the universal cover, which involves an infinite number of copies and therefore an IR-divergent entropy, we consider only a finite number of copies of each region. We use a finite covering space as a stepping stone to relate this entropy to the one on the original spacetime. In this argument, we will need to use the fact that a boundary region invariant under a bulk isometry admits an invariant HRT surface. We're not aware of a proof of this statement in the literature, so we devoted appendix \ref{sec:multipleHRT} to proving this explicitly (in any number of dimensions), and more generally to establishing some interesting facts about situations where a region admits multiple HRT surfaces. In subsection \ref{sec:other wormholes}, we will explain why the argument does not carry over immediately to spacetimes with other topologies, such as the $(3,0)$ and $(1,1)$ wormholes explored numerically in the previous section.

\subsection{(2,0) wormholes}
\label{sec:general(2,0)}

\begin{figure}
    \centering
    \includegraphics[width=.8\textwidth]{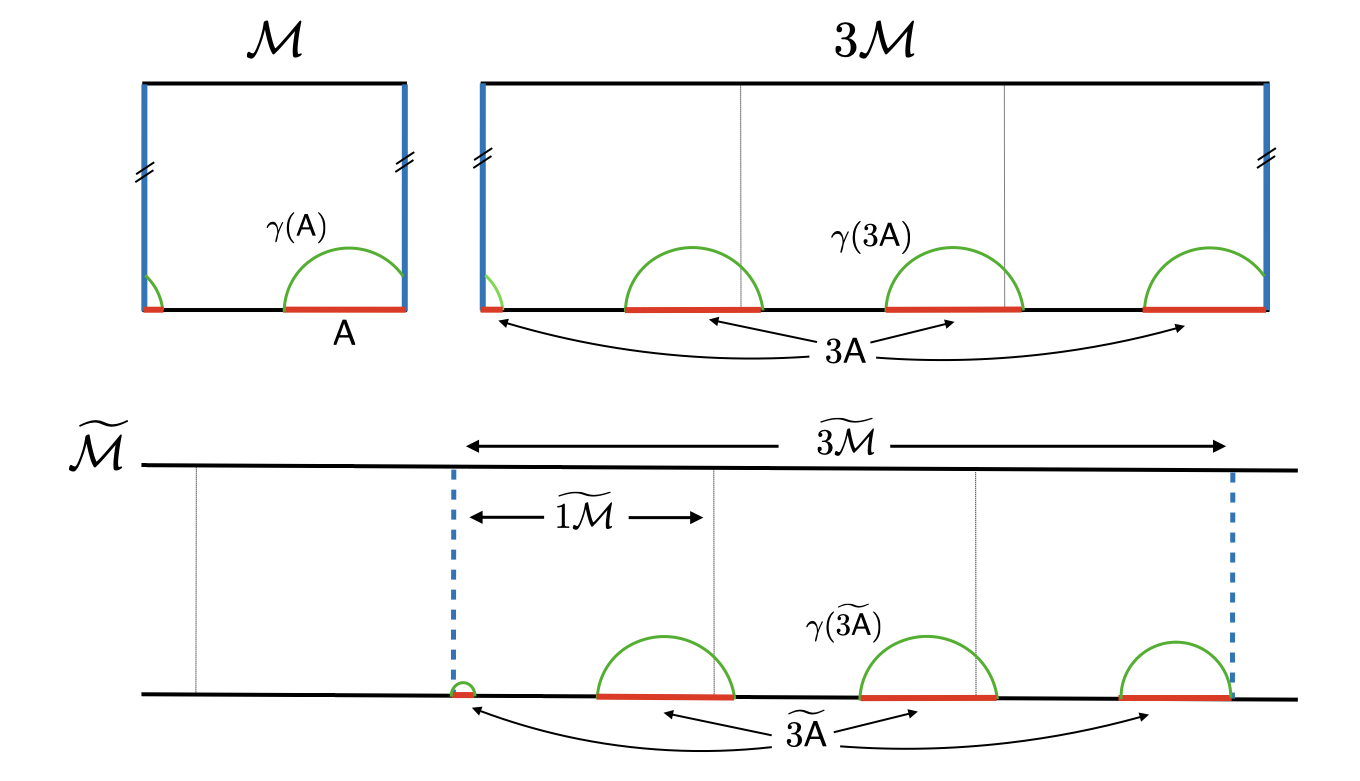}
    \caption{Schematic drawings of the relevant geometric objects entering the argument. For clarity, we draw only spatial slices, although of course these should be understood as embedded in their respective spacetimes. \textbf{[Top left]} Original spacetime $\mathcal{M}$. The top and bottom sides of the rectangle are the two asymptotic boundaries, and the two blue vertical sides are identified. The region $\mathsf{A}$ is shown in red and its HRT surface $\gamma(\mathsf{A})$ in green. 
    \textbf{[Top right]} $n$-fold cover $n\mathcal{M}$ (for the case $n = 3$). The two vertical blue sides are once again identified, making the $\mathbb{Z}_3$ symmetry of $3\mathcal{M}$ manifest, so it's clear that $|\gamma(3\mathsf{A})| = 3|\gamma(\mathsf{A})|$. \textbf{[Bottom]} Universal covering space $\widetilde{\mathcal{M}}$ (one should imagine the drawing continuing indefinitely to the left and right). Inside $\widetilde{\mathcal{M}}$ we show the fundamental domain $\widetilde{1\mathcal{M}}$ as well as $\widetilde{3\mathcal{M}}$. Crucially, the two sides are now cut along a hypersurface and are not identified. The HRT surface $\gamma(\widetilde{3\mathsf{A}})$ therefore will differ from $\gamma(3\mathsf{A})$ near the cut, since $\gamma(\widetilde{3\mathsf{A}})$ will get modified by the presence of the cut. In particular, the cut introduces two new entangling surfaces, and as explained in the text below $S(\widetilde{n\mathsf{A}})$ contains an extra UV-divergence of $(c/3)\ln(1/\epsilon)$. However, far from the cut the two surfaces $\gamma(\widetilde{3\mathsf{A}})$ and $\gamma(3\mathsf{A})$ coincide.
    }
    \label{fig:objects}
\end{figure}

\begin{figure}
    \centering
    \includegraphics[width=.8\textwidth]{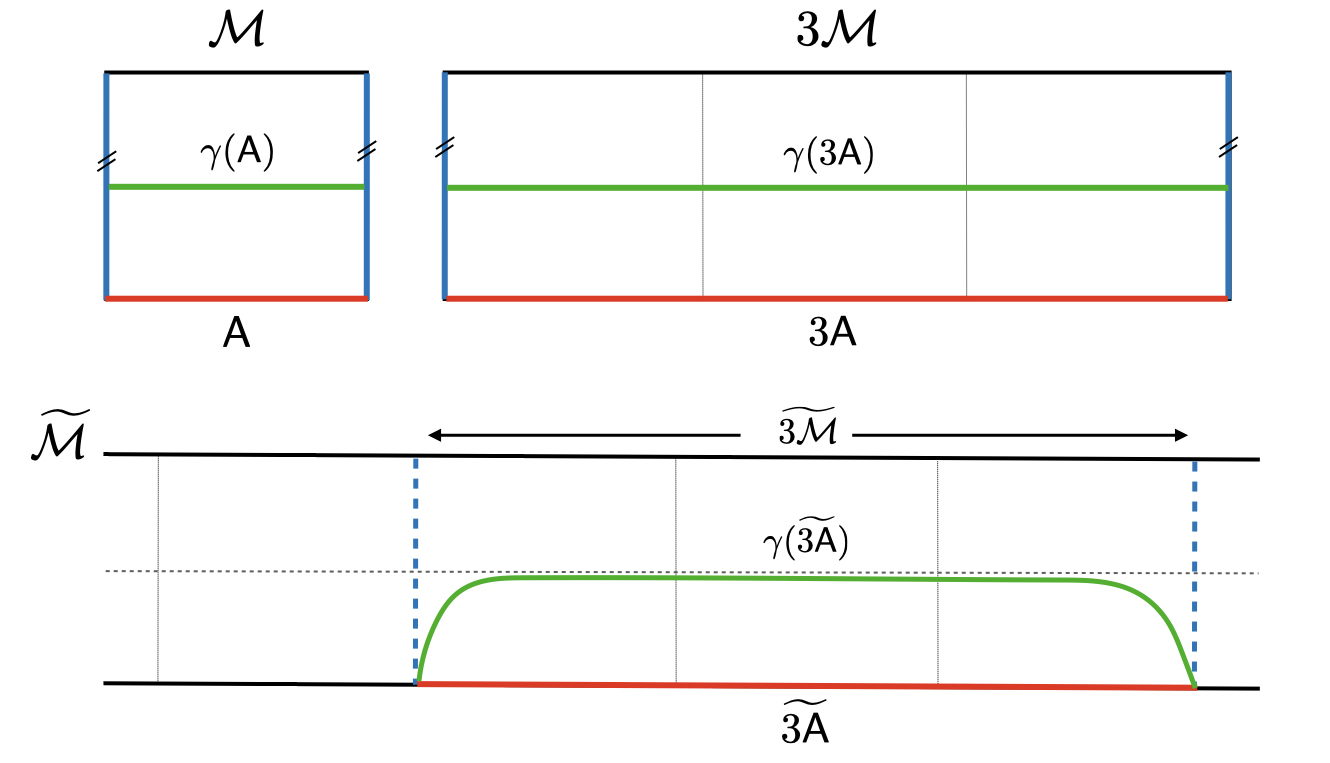}
    \caption{
    Same as figure \ref{fig:objects}, but where $\mathsf{A}$ (shown in red) is one entire boundary, so that its HRT surface $\gamma(\mathsf{A})$ (green) in $\mathcal{M}$ is a compact non-boundary anchored surface. While its lift $\gamma(n\mathsf{A})$ in $n\mathcal{M}$ is still compact, the HRT surface $\gamma(\widetilde{n\mathsf{A}})$ in $\widetilde{n\mathcal{M}}$ is boundary-anchored. For large $n$, far from the cut, it will hug (but not exactly coincide with) the lift of $\gamma(\mathsf{A})$ in the covering space $\widetilde{\mathcal{M}}$ (dotted horizontal line in $\widetilde{\mathcal{M}}$). The two surfaces, however, coincide in the limit $n \to \infty$, for a fixed position.
    }
    \label{fig:objects2}
\end{figure}

Let $\mathcal{M}$ be a $2+1$ dimensional spacetime obeying standard classical holographic assumptions (asymptotically locally AdS, globally hyperbolic, obeying the Einstein equation and null energy condition), with spatial topology of a cylinder. Fix a set of $\mathsf{N}$ elementary boundary regions $A,B,\ldots$ lying on a common Cauchy slice. Let $Q$ be an $\mathsf{N}$-party information quantity (a function of a joint state on $\mathsf{N}$ systems that is a linear combination of the entropies of the composite systems) such that $Q\ge0$ is an RT inequality.

The fundamental group of $\mathcal{M}$ is $\mathbb{Z}$. Before passing to the universal covering space, we consider the connected $n$-fold covering space, which we denote $n\mathcal{M}$, where $n$ is any positive integer. (See figures \ref{fig:objects}, \ref{fig:objects2} for an illustration of $n\mathcal{M}$ and the other relevant geometric objects entering into this discussion.) Note that any covering space of $\mathcal{M}$ obeys standard holographic assumptions, so the HRT formula can be sensibly applied. Given a composite boundary region $\mathsf{A}$ (including components on one or both boundaries), let $n\mathsf{A}$ be its lift to $n\mathcal{M}$; $n\mathsf{A}$ is a spatial boundary region for $n\mathcal{M}$. The HRT surface (or any HRT surface, if there is more than one) $\gamma(\mathsf{A})$ in $\mathcal{M}$ lifts to a candidate HRT surface for $n\mathsf{A}$ in $n\mathcal{M}$ (i.e.\ an extremal surface homologous to $n\mathsf{A}$). Conversely, there exist an HRT surface $\gamma(n\mathsf{A})$ that respects the $\mathbb{Z}_n$ isometry group of $n\mathcal{M}$ (see corollary \ref{invariant} in appendix \ref{sec:multipleHRT}), and therefore descends to a candidate HRT surface for $\mathsf{A}$. So we have
\be\label{S(nA)1}
S(n\mathsf{A})=nS(\mathsf{A})\,.
\ee

Now let $\widetilde{\mathcal{M}}$ be the universal covering space of $\mathcal{M}$. Fix a fundamental domain $\widetilde{1\mathcal{M}}$ for $\mathcal{M}$ in $\widetilde{\mathcal{M}}$. It will be convenient to choose $\widetilde{1\mathcal{M}}$ so that its boundary does not intersect any of the entangling surfaces (the boundaries of the elementary regions, or more precisely their lifts to $\widetilde{\mathcal{M}}$). Let $\widetilde{n\mathcal{M}}$ be the union of $n$ consecutive copies of $\widetilde{1\mathcal{M}}$. $\widetilde{\mathcal{M}}$ is also the universal cover of $n\mathcal{M}$, and $\widetilde{n\mathcal{M}}$ is a fundamental domain for the latter. In other words, $\widetilde{n\mathcal{M}}$ is equal to $n\mathcal{M}$ cut open along some hypersurface.

Any composite region $\mathsf{A}$ has an image in $\widetilde{n\mathcal{M}}$, consisting of $n$ copies, which we denote $\widetilde{n\mathsf{A}}$. (And of course we have $\widetilde{n(AB)}=(\widetilde{nA})\cup(\widetilde{nB})$, etc.) Under the covering projection from $\widetilde{\mathcal{M}}$ to $n\mathcal{M}$, the HRT surface $\gamma(\widetilde{n\mathsf{A}})$ in $\widetilde{\mathcal{M}}$ maps to an extremal surface in $n\mathcal{M}$. That surface is not necessarily equal to $\gamma(n\mathsf{A})$, due to the effect of the cut, and therefore in general $S(\widetilde{n\mathsf{A}})\neq S(n\mathsf{A})$. However, for $n\gg1$, far away from the cut, we expect its effect to be negligible. (This statement seems intuitively clear to us, but we have not attempted to prove it formally.) Hence the difference in their areas should be subleading in $n$:
\be\label{areadiff}
\lim_{n\to\infty}\frac{S(\widetilde{n\mathsf{A}}) - S(n\mathsf{A})}n=0\,.
\ee
Given \eqref{S(nA)1}, this implies
\be\label{S(nA)}
\lim_{n\to\infty}\frac{S(\widetilde{n\mathsf{A}})}n = S(\mathsf{A})\,.
\ee
Hence,
\be\label{nQ}
\lim_{n\to\infty}\frac{Q(\widetilde{nA},\widetilde{nB},\ldots)}n = Q(A,B,\ldots)\,.
\ee
Since $\widetilde{\mathcal{M}}$ is a simply-connected spacetime, according to Czech-Dong \cite{Czech:2019lps}, $Q(\widetilde{nA},\widetilde{nB},\ldots)\ge0$, so $Q(A,B,\ldots)\ge0$.

Actually, there is a caveat to \eqref{areadiff} and \eqref{S(nA)}, which is that the two entropies appearing in each of these equations may differ by a UV-divergent amount. Specifically, due to the cut, $S(\widetilde{n\mathsf{A}})$ contains an extra contribution of $n_{\mathsf{A}}(c/3)\ln(1/\epsilon)$, where $\epsilon$ is the UV cutoff and $n_{\mathsf{A}}$ is the number of components of $\mathsf{A}$ that are cut by the fundamental domain boundary. Hence there is an order-of-limits issue: while \eqref{areadiff} and \eqref{S(nA)} are correct at fixed $\epsilon$, we need to take $n\gg\ln(1/\epsilon)$ to see the convergence. However, this issue does not apply to \eqref{nQ}: All RT inequalities are balanced, meaning that, for any elementary region, the total coefficient of all the terms in $Q$ involving that region vanishes.\footnote{More precisely, all primitive RT inequalities --- those that cannot be written as a positive linear combination of other RT inequalities --- are balanced, except Araki-Lieb, $S(AB)+S(A)\ge S(B)$.} This implies that the extra divergences due to the cut cancel in $Q(\widetilde{nA},\widetilde{nB},\ldots)$, so the difference  $Q(\widetilde{nA},\widetilde{nB},\ldots)/n-Q(A,B,\ldots)$ is UV-finite for all $n$.

\subsection{Other topologies?}
\label{sec:other wormholes}

The general construction of the proof generalizes straightforwardly to an arbitrary wormhole. However, as we will explain, because of a key difference in the structure of the fundamental group in cases other than $(2,0)$, the estimate \eqref{areadiff} does not obviously carry over.

As above, we start with finite covers. Given an $n$-fold cover $n\mathcal{M}$ of $\mathcal{M}$, its deck transformations are isometries. If the cover is normal, then $\mathcal{M}$ is the quotient of $n\mathcal{M}$ by the group of deck transformations, which plays the role of $\mathbb{Z}_n$ in the paragraph above \eqref{S(nA)1}, and by the same argument, we again have \eqref{S(nA)1}. 

\begin{figure}
    \centering
    \includegraphics[width=.3\textwidth]{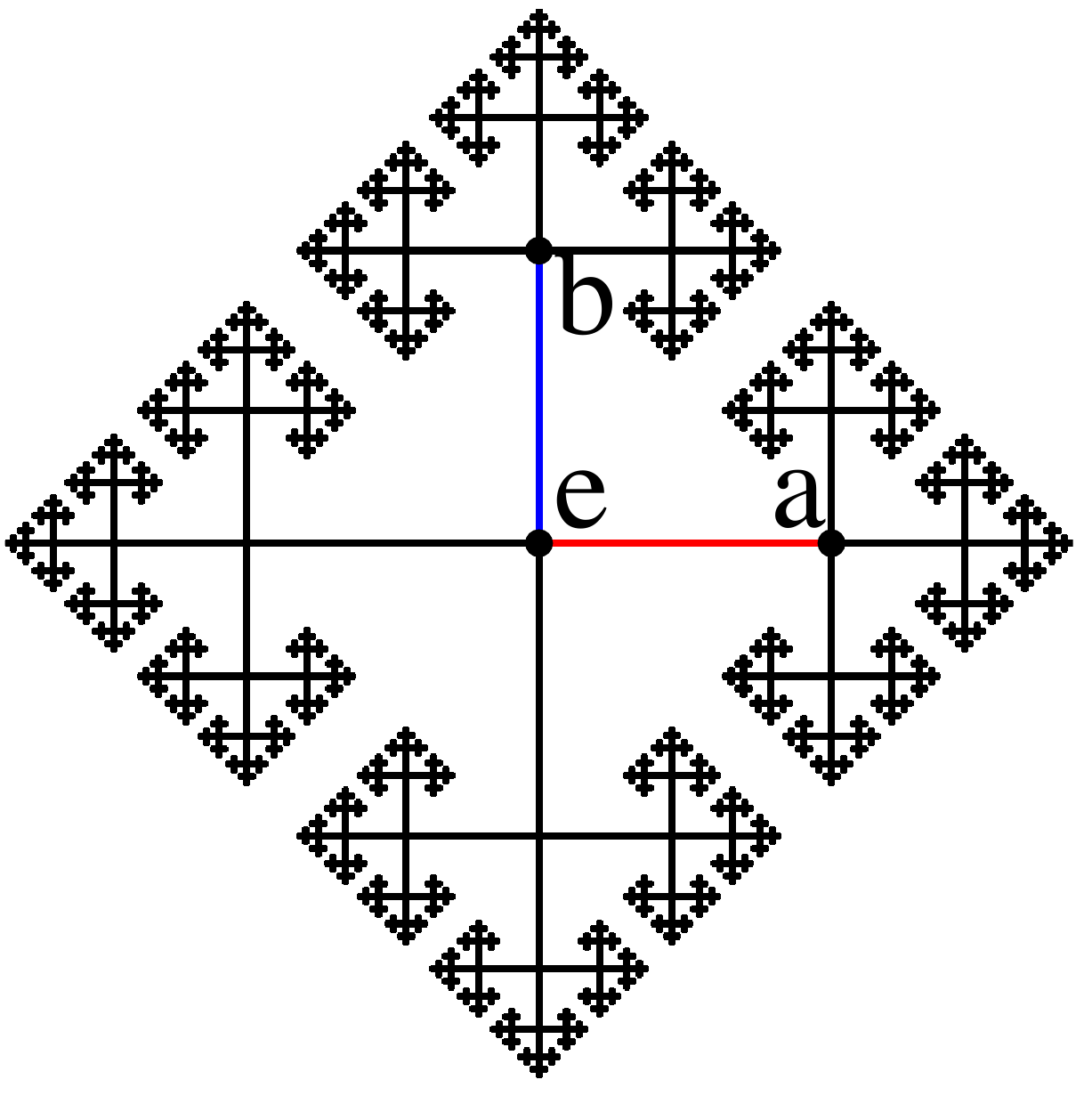}
    \caption{The Cayley graph $G$ for $F_2$, the free group with 2 generators $a,b$. $F_2$ is the fundamental group for both $(3,0)$ and $(1,1)$ wormholes. (Figure from Wikipedia.)}
    \label{fig:Cayley}
\end{figure}

We now turn to the universal cover. For an $(n',g)$ wormhole, the fundamental group $\pi_1(\mathcal{M})$ is that of a Riemann surface with genus $g$ and $n'\ge1$ punctures, namely the free group $F_r$ of rank $r=2g+n'-1$. The universal cover $\widetilde{\mathcal{M}}$ can be constructed in terms of the Cayley graph $G$ of its fundamental group, an infinite tree of coordination number $2r$; see figure \ref{fig:Cayley} for the Cayley graph of $F_2$, the fundamental group of both $(3,0)$ and $(1,1)$ wormholes. Each vertex of $G$ corresponds to a fundamental domain, and the fundamental domains are glued together along the edges of the graph.

Again, $n\mathcal{M}$ can be mapped bijectively to a subset $\widetilde{n\mathcal{M}}$ of $\widetilde{\mathcal{M}}$ consisting of $n$ contiguous fundamental domains, corresponding to a set of $n$ vertices of $G$; again, $\widetilde{n\mathcal{M}}$ is equal to $n\mathcal{M}$ cut open along some hypersurface. Again, we wish to compare the entropy of the lift of a given composite region $\mathsf{A}$ in $\widetilde{n\mathcal{M}}$ to that in $n\mathcal{M}$, in other words to compute $S(\widetilde{n\mathsf{A}})-S(n\mathsf{A})$. And again, for $n\gg1$, far away from the cut, we expect the effect of the cut to be negligible. The next step, however, namely the statement that ``the difference in their areas should be subleading in $n$'', does not follow anymore. The reason is that the fraction of $\widetilde{n\mathcal{M}}$, and therefore the fraction of the HRT surface $\gamma(\widetilde{n\mathsf{A}})$, that is \emph{close} to the cut (in any sense of \emph{close}), and therefore where the effect of the cut may \emph{not} be negligible, is not subleading. This is because of the hyperbolicity of $G$: for any family $s_n$ of sets of $n$ vertices, the fraction of vertices in $s_n$ within a given distance of its complement does not go to 0 as $n\to\infty$ (see figure \ref{fig:Cayley}). For the $(2,0)$ case, on the other hand, the Cayley graph is linear, so for connected subsets, that fraction does go to 0.

The numerical evidence collected in section \ref{sec:numerical} nonetheless strongly supports the validity of RT inequalities for $(3,0)$ and $(1,1)$ wormholes, and by extrapolation for all wormholes. If this is true, then either there exists a clever variation on the above argument that evades the issue with the hyperbolicity of the Cayley graph, or the inequalities are simply valid for some completely different reason.

\acknowledgments

We would like to thank B. Czech and X. Dong for helpful correspondence and conversations, as well as A. Wall for encouragement to publish this work and B. Czech for valuable comments on a draft. B.G.W., G.G., and M.H. were supported in part by the Department of Energy through awards DE-SC0009986 and QuantISED DE-SC0020360, and in part by the Simons Foundation through the \emph{It from Qubit} Simons Collaboration. B.G.W was also supported by the AFOSR under FA9550-19-1-0360. V.H. was supported in part by the Department of Energy through awards DE-SC0009999 and QuantISED DE-SC0020360, and by funds from the University of California.
This research was supported in part by grant NSF PHY-2309135 to the Kavli Institute for Theoretical Physics (KITP), where part of this work was completed. This work was also performed in part at the Aspen Center for Physics, which is supported by National Science Foundation grant PHY-2210452. We are also grateful to the Perimeter Institute, the Centro de Ciencias de Benasque Pedro Pascual, and the Yukawa Institute for Theoretical Physics, where part of this work was completed.

\appendix

\section{Calculation of entanglement entropies in AdS$_3$ quotients}\label{sec:quotients}

In this appendix we give a brief review of \cite{Maxfield:2014kra}, where a framework to compute holographic entanglement entropies in pure 3d gravity was introduced. In particular we will be interested in computing lengths of boundary-anchored geodesics in the quotient spacetime and systematically imposing the homology condition.

\subsection*{AdS$_3$ as a group manifold}

Anti de Sitter space is the maximally symmetric Einstein manifold with negative constant curvature. In 2+1 dimensions, it can be described as the submanifold
\begin{equation}
    -U^2 -V^2 + X^2 + Y^2 = -1
\end{equation}
embedded in $\mathbb{R}^{2,2}$. In the above we have set the AdS scale to one. Equivalently, we can view this locus of points as the group manifold SL(2,$\mathbb{R}$) by combining the coordinates $(U,V,X,Y)$ into a $2\times 2$ matrix with unit determinant, i.e. 
\begin{equation}
    \mathbf{M} = 
    \begin{pmatrix}
    U+X &\, Y-V\\
    Y+V &\, U-X
    \end{pmatrix}, \quad \det \mathbf{M} = 1.
\end{equation}
It is then clear that the isometries of this manifold are elements of the group
\begin{equation}
    (\text{SL}(2,\mathbb{R}) \times \text{SL}(2,\mathbb{R}))/ \mathbb{Z}_2 
\end{equation}
acting on $\mathbf{M}$ by left and right multiplication, that is $(g_L, g_R) \in \text{SL}(2,\mathbb{R}) \times \text{SL}(2,\mathbb{R})$ with action defined as $\mathbf{M} \mapsto g_L \mathbf{M} g_R^{\top}$. The quotient by $\mathbb{Z}_2$ is there because an overall sign results in the same transformation. 

We will be particularly interested in describing the asymptotic boundary in this group manifold picture. To do that we introduce coordinates $(r,t,\phi)$ which are related to the embedding coordinates by
\begin{align}
U &= \sqrt{1+r^2}\cos{t},\\
V &=\sqrt{1+r^2}\sin{t},\\
X &= r \cos{\phi},\\
Y &= r \sin{\phi}.
\end{align}
with $0<r<\infty$, and both $\phi$ and $t$ have period $2\pi$. To obtain the more standard AdS$_3$ spacetime one considers the covering space of this hyperboloid by unwrapping the time direction. In these coordinates the metric takes the standard form
\be
\dd s^2 = - (1+ r^2) \dd t^2 + \frac{\dd r^2}{1+r^2} + r^2 \dd \phi^2.
\ee
In these coordinates, points in $\mathbf{M}$ then take the form
\be
\begin{pmatrix}
    \sqrt{1+r^2}\cos{t} + r \cos{\phi} & \; \; \; r\sin{\phi}-\sqrt{1+r^2}\sin{t}\\
    \sqrt{1+r^2}\sin{t} + r\sin{\phi} & \; \; \; \sqrt{1+r^2}\cos{t}- r\cos{\phi} 
\end{pmatrix},
\ee
and the asymptotic boundary $\partial \mathbf{M}$ is found by taking $r$ large
\be
\partial \mathbf{M}\sim
\begin{pmatrix}
 \cos{t}+ \cos{\phi}& \sin{\phi}- \sin{t}\\
 \sin{t} + \sin\phi &\cos{t} - \cos\phi 
\end{pmatrix} = 2\begin{pmatrix}
\cos{\frac{t + \phi}{2}}\cos{\frac{t - \phi}{2}} &-\cos{\frac{t + \phi}{2}}\sin{\frac{t - \phi}{2}} \\
 \sin{\frac{t + \phi}{2}}\cos{\frac{t - \phi}{2}}&-\sin{\frac{t + \phi}{2}}\sin{\frac{t - \phi}{2}}.
\end{pmatrix} 
\ee
We removed a common radial factor which blows up. Equivalently, the above set of matrices can be written as a product of unit vectors $\vec{x}, \vec{y}$ as follows
\be
2\begin{pmatrix}
\cos{\frac{t + \phi}{2}}\cos{\frac{t - \phi}{2}} &-\cos{\frac{t + \phi}{2}}\sin{\frac{t - \phi}{2}} \\
 \sin{\frac{t + \phi}{2}}\cos{\frac{t - \phi}{2}}&-\sin{\frac{t + \phi}{2}}\sin{\frac{t - \phi}{2}} 
\end{pmatrix} 
= 2
\begin{pmatrix}
\cos{\frac{t + \phi}{2}}\\
\sin{\frac{t + \phi}{2}}
\end{pmatrix}
\begin{pmatrix}
\cos{\frac{t - \phi}{2}}&
-\sin{\frac{t - \phi}{2}}
\end{pmatrix}
\propto \vec{x} \vec{y}^{\top}.
\ee
When studying entropy inequalities, we will often be interested in points lying on the $t = 0$ slice on the boundary. These are the locus of points $\vec{x}\vec{x}^\top$, i.e.\ when $\vec{x}=\vec{y}$. We will also be interested in points on the asymptotic boundary that are unmoved when acted by an isometry. These are called fixed points and are defined as
\be
g_L \mathbf{p} g_R^{T} \propto  \mathbf{p} \quad\Rightarrow\quad (g_L \vec{x}) (g_R \vec{y})^\top \propto \vec{x}\vec{y}^{\top}
\ee
or, in other words, they are points where $\vec{x}$ and $\vec{y}$ are the eigenvectors of $g_L$ and $g_R$ respectively, where we left out an overall proportionality since it doesn't affect the asymptotic boundary.

\subsection*{Geodesic lengths in AdS$_3$}

Since we are working with entanglement entropies, we will be interested in computing the geodesic length between points on the asymptotic boundary. Let us begin by computing the geodesic distance between two arbitrary points $\mathbf{p}$ and $\mathbf{q}$ which are spacelike separated. This is easiest to compute starting from the embedding coordinates since geodesics are straight lines when viewed in $\mathbb{R}^{2,2}$, with length $\ell$ given by
\begin{align}
\cosh{\ell(\mathbf{p},\mathbf{q})} &= U_1 U_2 + V_1 V_2 - X_1 X_2 - Y_1 Y_2 \\ 
&= \frac{\Tr(\mathbf{p}^{-1}\mathbf{q})}{2}.
\end{align}
If we send $\mathbf{p}$ and $\mathbf{q}$ on the asymptotic boundary we can pull a radial factor out each matrix, obtaining
\begin{align}
\cosh{\ell(\mathbf{p},\mathbf{q})} &= \frac{r^2 \Tr(R_{\perp}\mathbf{p}^{\top}R^{\top}_{\perp}\mathbf{q})}{2}
\end{align}
where $R_{\perp}$ is the rotation matrix by $\pi/2$ used to implement the inverse since once we pull out the radial factor the matrices become singular. The regularized length between boundary points can then be written as
\begin{align}\label{eq:lengthAdS3}
    \ell(\mathbf{p}, \mathbf{q}) &= \cosh^{-1}\left(\frac{r^2\, \Tr(R_{\perp}\mathbf{p}^{\top}R^{\top}_{\perp}\mathbf{q})}{2}\right)\\
    &= \log[\Tr(R_{\perp}\mathbf{p}^{\top}R^{\top}_{\perp}\mathbf{q})] + \dots\\
    &=\log{[(R_{\perp}\vec{x}_1\cdot \vec{x}_2)(R_{\perp}\vec{y}_1 \cdot \vec{y}_2)]}.
\end{align}
where we have disregarded the radial factor and kept only the finite parts.

\subsection*{Quotients}

In 2+1 dimensions, there are no propagating degrees of freedom, so any Einstein manifold with $\Lambda <0$ is locally AdS$_3$. However, there may still be interesting structure globally, in particular made by taking quotients of $\text{AdS}_3$ by an appropriate subgroup $\Gamma$ of its isometry group. More specifically, $\Gamma$ will be a discrete subgroup of $\text{SL}(2,\mathbb{R})\times \text{SL}(2,\mathbb{R})$ defined by a finite number of generators with no relations, i.e. a free group of some rank $r$.  

To obtain a quotient spacetime that is free of pathologies, there are a few things to take care of. As previously mentioned, elements of $\Gamma$ will have fixed points which will form singularities in the quotient space $\text{AdS}_3/\Gamma$ . Therefore, we first remove from $\text{AdS}_3$ all points to the future and past of the fixed points of all elements $\gamma \in \Gamma$. To obtain a spacetime free of closed timelike curves, we must quotient only points that are spacelike separated. In practice this is enacted by demanding the elements of $\Gamma$ to be hyperbolic, which are the elements that are "spacelike" separated from the identity, when one places elements of $\text{SL}(2,\mathbb{R})$ in a Penrose diagram. It is convenient to represent these hyperbolic elements as exponentials of elements of the Lie algebra. The $\mathfrak{sl}(2,\mathbb{R})$ Lie algebra can be seen as a three-dimensional Lorentz vector space with coordinates $(t,x,z)$ and Killing norm given by $-t^2 + x^2 + z^2$. An element of this Lie algebra then takes the form
\be\label{eq:lie-algebra}
\frac{1}{2}
\begin{pmatrix}
    z  & x-t\\
    x+t & -z
\end{pmatrix}.
\ee

\subsection*{Lengths in the quotient space}

In this paper we will be interested in computing the entanglement entropy of an interval on the boundary of the quotient space $\text{AdS}_3/\Gamma$. Let $\mathbf{p}$ and $\mathbf{q}$ be two points on the boundary of the quotient space. Since $\text{AdS}_3$ is simply connected, the quotient space has fundamental group isomorphic to $\Gamma$. In each fixed endpoint homotopy class there exists a unique geodesic. Thus, for fixed $\mathbf{p}$ and $\mathbf{q}$ there is one geodesic for each element $\gamma \in \Gamma$. To compute the length of each geodesic, we move into the covering space $\text{AdS}_3$, pick two appropriate representative points for $\mathbf{p}$ and $\mathbf{q}$ and use \eqref{eq:lengthAdS3}. In practice one can define $\hat{\mathbf{p}}$ and $\hat{\mathbf{q}}$ to be in the fundamental domain in the covering space, then move $\hat{\mathbf{q}}$ to any $\gamma_L \mathbf{q} \gamma_R^{\top}$ while keeping $\hat{\mathbf{p}}$ fixed --- doing this will compute the length of the geodesic in the $\gamma$ homotopy class in the quotient. Therefore, for $\mathbf{p}$ and $\mathbf{q}$ boundary points on the quotient the length of the geodesic in the $\gamma$ homotopy class is
given by
\begin{align}\label{eq:lengthQuotient}
\ell(\mathbf{p},\mathbf{q},\gamma) &= \log[\Tr(R_{\perp}\mathbf{p}^{\top}R^{\top}_{\perp} \gamma_L\mathbf{q} \gamma_R^{\top})]\\
&= \log{[(R_{\perp}\vec{x}_1\cdot \gamma_{L} \vec{x}_2)(R_{\perp}\vec{y}_1 \cdot \gamma_{R}
\vec{y}_2)]}
\end{align}
with $\gamma = (\gamma_L,\gamma_R) \in \Gamma$. Similarly, closed geodesics will also be in one-to-one correspondence with conjugacy classes of $\Gamma$. In the covering space a closed geodesic gets lifted to a geodesic connecting fixed points, and its length is
\be
\ell(\gamma) = \cosh^{-1}\left(\frac{\Tr \gamma_L}{2}\right) + \cosh^{-1}\left(\frac{\Tr \gamma_R}{2}\right).
\ee

\subsection*{The homology condition}\label{homology}

According to the RT and HRT prescriptions, one must perform the extremization among surfaces that are spacelike homologous to the boundary region. One way to say this is that there exists a codimension-1 spacelike region $\mathsf{R}$ whose boundary is the union of the HRT surface $\gamma_{\text{HRT}}$ and the boundary region $\mathsf{A}$, i.e. $\partial \mathsf{R} = \gamma_{\text{HRT}} \cup \mathsf{A}$. Another way of saying this is that the surface $\gamma_{\text{HRT}} \cup \mathsf{A}$ has trivial homology. By the Hurewicz theorem, the first homology group is the abelianization of the fundamental group. Since the fundamental group of the quotient spacetime is $\Gamma$ and since to each homotopy class $\gamma \in \Gamma$ there is an associated geodesic $\mathsf{g}_{\gamma}$, one can compute the homology of $\mathsf{g}_{\gamma}$ by simply looking at the abelianization of $\gamma$. If the resulting word is not trivial, the geodesic must be augmented by the closed geodesic associated to the element which trivializes the homology of the original geodesic once abelianized.

\section{Degenerate HRT surfaces \& symmetry breaking}
\label{sec:multipleHRT}

In this appendix we discuss situations in which a boundary region admits multiple HRT surfaces. This occurs when the bulk metric and boundary region are fine-tuned so that competing candidate surfaces have the same area. The motivation in the context of this paper has to do with spontaneous breaking of symmetries; in particular, in the discussion of subsection \ref{sec:general(2,0)}, we needed the existence of an HRT surface that does \emph{not} a break a symmetry shared by the bulk and boundary region; we will prove this fact here. The results of this appendix are likely known (or perhaps obvious) to experts on holographic entanglement entropy; however, as far as we know they have not yet appeared in written form.

Throughout this appendix we fix a bulk spacetime and a boundary region $\mathsf{A}$. We will also make a very mild genericity assumption, that for every HRT surface at least one of the principal eigenvalues of the null diagonal components of the Jacobi operator is non-zero; essentially this rules out continuos families of HRT surfaces (see the proof of lemma \ref{lemma:allHRT}). The discussion is general with respect to dimensionality; unlike in the rest of the paper, we do not restrict ourselves to $2+1$ dimensions.

Our main result will be the construction of a nested family of HRT surfaces. For this purpose, the maximin formulation of HRT \cite{Wall:2012uf} is useful. Given a bulk Cauchy slice $\sigma$ containing the entangling surface $\partial\mathsf{A}$, we define $\mathsf{A}_\sigma:= D(A)\cap \partial\sigma$; $\sigma$ is a maximin slice if the minimal area over surfaces in $\sigma$ homologous to $\mathsf{A}_\sigma$ equals $S(\mathsf{A})$. We then have the following lemma:
\begin{lemma}\label{lemma:allHRT}
Any maximin slice $\sigma$ contains all HRT surfaces. 
\end{lemma}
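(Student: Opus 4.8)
The plan is to show that an arbitrary HRT surface $\gamma$ for $\mathsf{A}$ must lie inside any given maximin slice $\sigma$. The starting point is the maximin definition: $\sigma$ is a slice on which the minimal-area representative homologous to $\mathsf{A}_\sigma$ has area exactly $S(\mathsf{A})$, and $S(\mathsf{A})$ is the maximum of this min-area over all Cauchy slices through $\partial\mathsf{A}$. I would first recall the standard maximin fact \cite{Wall:2012uf} that the minimal surface $m(\sigma)$ on a maximin slice is itself an HRT (extremal) surface for $\mathsf{A}$, so $\gamma$ and $m(\sigma)$ have equal area $S(\mathsf{A})$. The goal is then to prove $\gamma \subset \sigma$, not merely that $\gamma$ shares its area with something in $\sigma$.

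The key step is a caustic/focusing comparison argument. First I would argue that $\gamma$, being extremal, intersects the maximin slice $\sigma$: since $\partial\gamma = \partial\mathsf{A} \subset \sigma$ and $\sigma$ is a Cauchy slice, $\gamma$ cannot avoid $\sigma$ globally. I would then consider the part of $\gamma$ off $\sigma$ and project it, or compare it against, a surface lying in $\sigma$. The crucial inequality is that an extremal (maximal-area along timelike deformations, minimal along spacelike) surface cannot be pushed off a Cauchy slice without decreasing area relative to the minimal surface on that slice, \emph{unless} it already lies in the slice. Concretely, I expect to use that the projection of $\gamma$ onto $\sigma$ along the slice's normal flow yields a surface $\gamma'$ homologous to $\mathsf{A}_\sigma$ with $\mathrm{area}(\gamma') \le \mathrm{area}(\gamma)$, with equality iff $\gamma\subset\sigma$; this is the spacelike-projection step underlying maximin, and it relies on the spacetime assumptions (global hyperbolicity, causal structure). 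Combining $\mathrm{area}(\gamma')\ge S(\mathsf{A})$ (minimality on $\sigma$) with $\mathrm{area}(\gamma')\le \mathrm{area}(\gamma)=S(\mathsf{A})$ forces equality throughout, hence $\gamma\subset\sigma$.

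I anticipate the main obstacle to be making the projection/comparison rigorous when $\gamma$ is tangent to $\sigma$ or touches it along a lower-dimensional set rather than crossing transversally, and more seriously handling the degenerate case where the equality $\mathrm{area}(\gamma')=\mathrm{area}(\gamma)$ could in principle hold without $\gamma\subset\sigma$. This is precisely where the stated genericity assumption enters: the hypothesis that at least one principal eigenvalue of the null-null components of the Jacobi operator is nonzero rules out a flat direction in the second-variation, so that strict focusing occurs and the projection strictly decreases area off $\sigma$. I would invoke this to upgrade the weak inequality to a strict one away from $\sigma$, closing the argument. A secondary subtlety is the homology bookkeeping---ensuring that the projected surface $\gamma'$ is genuinely homologous to $\mathsf{A}_\sigma$ within $\sigma$---which I expect follows from the fact that the homology region of $\gamma$ intersects $\sigma$ in a region bounded by $\mathsf{A}_\sigma$ and $\gamma'$, but this should be verified carefully rather than assumed.
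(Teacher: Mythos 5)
Your generic-case argument coincides with the paper's first paragraph: project $\gamma$ onto $\sigma$ along the null congruence, use focusing to get $\area(\gamma')\le\area(\gamma)=S(\mathsf{A})$, and compare with minimality on $\sigma$. (One imprecision: the projection is along the null congruence fired from $\gamma$, not along the slice's ``normal flow''; the area non-increase comes from Raychaudhuri plus the null energy condition, seeded by the vanishing null expansions of the extremal $\gamma$.)

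The genuine gap is in your treatment of the degenerate case. You claim the genericity assumption (a nonzero principal eigenvalue of ${J^v}_v$ or ${J^u}_u$) ``rules out a flat direction in the second variation, so that strict focusing occurs and the projection strictly decreases area off $\sigma$.'' That inference is false: the Jacobi operator governs the expansion of \emph{deformed} surfaces near $\gamma$, not the expansion along the congruence from $\gamma$ itself, which starts at zero and can remain identically zero all the way to $\sigma$ even when the genericity assumption holds --- the paper's own example is the bifurcation surface of a static black hole, whose null congruence (the horizon) has everywhere-vanishing expansion. In that situation your ``equality iff $\gamma\subset\sigma$'' dichotomy fails and the argument stalls. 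The paper closes this case with a two-step construction you are missing: (i) deform $\gamma$ within the zero-expansion congruence to $\gamma_{\hat\phi}$ using the principal eigenfunction $\hat\phi$ of ${J^v}_v$ (same area as $\gamma$, since the congruence is non-expanding), whose expansion in the $u$ direction is $-\epsilon\lambda\hat\phi$; (ii) rule out $\lambda<0$ by noting it would let one decrease the area below that of $\gamma$ on a slice containing $\gamma$, contradicting the maximality half of maximin; hence $\lambda>0$, so $\gamma_{\hat\phi}$ has non-positive, somewhere-negative expansion toward $\sigma$, and \emph{its} projection onto $\sigma$ has area strictly below $S(\mathsf{A})$, contradicting minimality. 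Without this deformation-then-projection step and the sign argument for $\lambda$, the degenerate case --- which is the entire content of the lemma beyond the standard maximin manipulation --- is not handled.
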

\begin{proof}
Otherwise, an HRT surface $\gamma$ not contained in $\sigma$ can be projected onto it, giving (by the usual focusing argument) a surface $\gamma'$ on $\sigma$ that is no larger than $\gamma$. Generically, the expansion is negative somewhere on the null congruence between $\gamma$ and $\gamma'$, so that $\gamma'$ has smaller area than $\gamma$, a contradiction.

However, it is not always the case that the expansion is negative somewhere on the null congruence; for example, for the bifurcation surface of a static black hole, the expansion vanishes everywhere on its null congruence, which is the event horizon. For this non-generic case, a more complicated argument is required. Assume for simplicity that $\sigma$ is nowhere to the past of $\gamma$; if it is nowhere to the future, we just switch future and past in the argument, while if it is partly in the future and partly in the past then a small adaption is required that we leave as an exercise to the reader. Fix future-directed null coordinates $u,v$ for the normal bundle to $\gamma$, such that the induced metric on the normal space is $ds^2=-2du\,dv$. Given a function $\phi$ on $\gamma$, let $\gamma_\phi$ denote the surface at the position $u=0$, $v=\epsilon\phi$, where $\epsilon$ is a fixed small parameter. By assumption, $\gamma_\phi$ has the same area as $\gamma$. Its expansion in the positive $u$ direction is
\be
\theta = -\epsilon {J^v}_v\phi\,,
\ee
where $J$ is the Jacobi operator of $\gamma$. We will not need the full expression for ${J^v}_v$, but it is an elliptic operator taking the general form
\be\label{Jform}
{J^v}_v = -\nabla^2+2t^i\partial_i+V\,,
\ee
where $\nabla^2$ is the scalar Laplacian, $t^i$ is a vector field, and $V$ is a function, all on $\gamma$. Due to the $\partial_i$ term, this operator is not self-adjoint. Nonetheless, according to lemma 4.1 of \cite{Andersson:2007fh}, any operator of the form \eqref{Jform} admits a real eigenvalue $\lambda$ (called the principal eigenvalue) whose corresponding eigenfunction $\hat\phi$ is everywhere non-negative.\footnote{Actually, lemma 4.1 of \cite{Andersson:2007fh} requires $\gamma$ to be connected and closed. However, based on the description of the proof, the closedness assumption can be replaced with $\gamma$ being compact with a Dirichlet boundary condition imposed on $\phi$. We can also drop the connectedness assumption, at the cost of the principal eigenvalue potentially being degenerate and the principal eigenfunction potentially being non-negative rather than positive, both of which are fine for our application.}

We can do a similar construction in the $u$ direction. We will make the genericity assumption that the principal eigenvalues of ${J^v}_v$ and ${J^u}_u$ are not both zero. Some assumption of this kind is necessary for the lemma to hold. For example, if the metric in a neighborhood of $\gamma$ is a product, $ds^2=-2du\,dv+ds_{\gamma}^2$, then both ${J^v}_v$ and ${J^u}_u$ would have vanishing principal eigenvalue, and indeed the lemma would be false.

Assume without loss of generality that ${J^v}_v$ has a nonzero principal eigenvalue $\lambda$. We cannot have $\lambda<0$, for then the expansion of $\gamma_{\hat\phi}$ in the negative $u$ direction would be non-positive (and somewhere negative), hence its projection onto any slice containing $\gamma$ would be smaller than $\gamma$, contradicting the fact that $\gamma$ is maximin, hence maximal on some slice. So $\lambda>0$. Therefore, the expansion of $\gamma_{\hat\phi}$ in the positive $u$ direction is everywhere non-positive, and somewhere negative. Its projection onto $\sigma$ therefore has area smaller than $S(\mathsf{A})$, at last giving the contradiction we were looking for.
\end{proof}

Each HRT surface has a corresponding candidate entanglement wedge, the causal domain of the associated homology region, which we will call its ``wedge'' for short.\footnote{In the presence of degenerate HRT surfaces, it is natural to expect that the smallest candidate wedge, i.e.\ the one that is contained in all the others, is the true entanglement wedge, in the sense that bulk observables localized in it are contained in the CFT subalgebra associated with $\mathsf{A}$. Indeed, Theorem \ref{lemma:nested} shows that there exists a wedge that is contained in all the others.} While the wedges in a given collection are not necessarily nested, we will now show that we can construct a nested set.

\begin{theorem}\label{lemma:nested}
Suppose $\mathsf{A}$ admits a set of $N$ HRT surfaces $\gamma_i$, with corresponding wedges $W_i$. Then there exists a set of $N$ HRT surfaces $\tilde\gamma_a$, such that $\bigcup_a\tilde\gamma_a=\bigcup_i\gamma_i$, that are nested, i.e.\ whose wedges $\tilde W_a$ satisfy
\be
\tilde W_N\subseteq \tilde W_{N-1}\subseteq\cdots\subseteq \tilde W_1\,,
\ee
and such that
\be
\tilde W_1=\bigcup_iW_i\,,\qquad
\tilde W_N=\bigcap_iW_i\,.
\ee
(Note that the $\tilde\gamma_a$ are not necessarily all distinct, even if the $\gamma_i$ are.)
\end{theorem}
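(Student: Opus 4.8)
The plan is to build the nested family by a pairwise ``uncrossing'' procedure, analogous to the standard argument that minimal surfaces can be chosen nested in the static (RT) setting, but now carried out at the level of wedges using the maximin technology of lemma \ref{lemma:allHRT}. The key structural input is that, by lemma \ref{lemma:allHRT}, \emph{all} of the $\gamma_i$ live on a single common maximin slice $\sigma$. This reduces the covariant problem to a Riemannian one on $\sigma$: on a fixed slice, each $\gamma_i$ is a minimal (area-$S(\mathsf{A})$) surface homologous to $\mathsf{A}_\sigma$, with homology region $r_i\subseteq\sigma$, and $W_i=D(r_i)$. Since domain-of-dependence is inclusion-preserving, the wedge inclusions I want will follow from the corresponding inclusions of the homology regions $r_i$ on $\sigma$.

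First I would set up the uncrossing lemma for two surfaces on $\sigma$. Given homology regions $r_i,r_j$, form $r_i\cup r_j$ and $r_i\cap r_j$; their boundaries (relative to $\partial\sigma$) are again homologous to $\mathsf{A}_\sigma$, and the standard inclusion-exclusion/cut-and-paste inequality for areas gives
\be
\area(\partial(r_i\cup r_j))+\area(\partial(r_i\cap r_j))\le \area(\gamma_i)+\area(\gamma_j)=2S(\mathsf{A})\,.
\ee
Because $S(\mathsf{A})$ is the minimal area on $\sigma$, each term on the left is at least $S(\mathsf{A})$, so both inequalities are saturated and $\partial(r_i\cup r_j),\partial(r_i\cap r_j)$ are themselves minimal surfaces on $\sigma$, hence HRT surfaces for $\mathsf{A}$. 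Thus the set of homology regions is closed under union and intersection. Next I would order the resulting lattice: define $\tilde r_1=\bigcup_i r_i$ and $\tilde r_N=\bigcap_i r_i$, and more generally obtain a totally ordered chain $\tilde r_N\subseteq\cdots\subseteq\tilde r_1$ by taking successive unions/intersections (e.g. the $a$-th region as a union of all $a$-fold intersections of the $r_i$, the ``symmetric'' construction that produces exactly $N$ nested regions whose multiset of boundary areas matches $\bigcup_i\gamma_i$). Setting $\tilde\gamma_a=\partial\tilde r_a$ and $\tilde W_a=D(\tilde r_a)$ then yields the claimed nested wedges with the stated top and bottom elements, and $\bigcup_a\tilde\gamma_a=\bigcup_i\gamma_i$ by construction.

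The main obstacle, and the step I would spend the most care on, is justifying the cut-and-paste area inequality in the covariant setting and handling the genericity/corner issues. On a single slice $\sigma$ the inequality $\area(\partial(r_i\cup r_j))+\area(\partial(r_i\cap r_j))\le \area(\gamma_i)+\area(\gamma_j)$ is classical, but one must argue that the loci where $\gamma_i$ and $\gamma_j$ meet contribute no positive area defect (corners can only decrease area upon smoothing), and one must confirm that $\partial(r_i\cup r_j)$ and $\partial(r_i\cap r_j)$, being minimal \emph{on $\sigma$} and of area exactly $S(\mathsf{A})$, are genuine \emph{extremal} surfaces in the full spacetime and not merely minimal within the slice. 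For the latter I would invoke the maximin characterization in reverse: a surface on a maximin slice achieving the maximin value is an HRT surface, which is precisely the content underlying lemma \ref{lemma:allHRT}; combined with the genericity assumption stated at the start of the appendix (ruling out continuous families), this pins down that the uncrossed surfaces are bona fide HRT surfaces. A secondary bookkeeping point is verifying that the symmetric union/intersection construction produces \emph{exactly} $N$ surfaces (with multiplicity) rather than fewer, which is where the parenthetical remark that the $\tilde\gamma_a$ need not be distinct becomes essential.
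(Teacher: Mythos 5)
Your proposal is correct and follows essentially the same route as the paper: both place all the $\gamma_i$ on a common maximin slice via lemma \ref{lemma:allHRT}, define $\tilde r_a$ as the set of points lying in at least $a$ of the homology regions $r_i$ (your ``union of all $a$-fold intersections''), and conclude that each $\tilde\gamma_a$ is an HRT surface by squeezing $\sum_a|\tilde\gamma_a|\le N S(\mathsf{A})$ against $|\tilde\gamma_a|\ge S(\mathsf{A})$. The only cosmetic difference is that the paper obtains the total-area bound in a single global counting step rather than by iterated pairwise uncrossing.
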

\begin{proof}
Let $\sigma$ be a maximin slice. Let $r_i$ be the homology region on $\sigma$ for $\gamma_i$. For $a=1,\cdots,N$, let $\tilde r_a$ be the set of points on $\sigma$ contained in at least $a$ of the regions $r_i$. Thus, $\tilde r_1$ is their union while $\tilde r_N$ is their intersection. Each region intersects $\partial\sigma$ exactly on $\mathsf{A}_\sigma$. Hence the bulk part of its boundary, $\tilde\gamma_a:=\partial\tilde r_a\setminus\mathsf{A}$, is homologous to $\mathsf{A}$.
The total area of these surfaces is bounded by the total area of the original ones $\gamma_i$:
\be\label{suma}
\sum_a|\tilde\gamma_a|\le\sum_i|\gamma_i|=NS(\mathsf{A})\,.
\ee
(The reason it is not an equality is that two of the $\gamma_i$ could partially coincide, with the regions $r_i$ being locally complementary, so that in their union that part of the surfaces would not appear.) On the other hand, the minimal area on $\sigma$ is $S(\mathsf{A})$, so
\be\label{eacha}
|\tilde\gamma_a|\ge S(\mathsf{A})\,.
\ee
Together, \eqref{suma}, \eqref{eacha} imply that each $\tilde\gamma_a$ has area $S(\mathsf{A})$, and is therefore itself an HRT surface. (Hence the inequality in \eqref{suma} is saturated, which actually rules out the possibility of locally complementary homology regions.) These are clearly nested, $\tilde r_N\subseteq \tilde r_{N-1}\subseteq\cdots\subseteq\tilde r_1$, so their causal domains, $\tilde W_a$, are nested as well.
\end{proof}

The statement we needed in subsection \ref{sec:general(2,0)} about the existence of a symmetric HRT surface is a corollary of this lemma:
\begin{corollary}\label{invariant}
Suppose $\mathsf{A}$ is invariant under a finite group $G$ of isometries of the bulk. Then it admits an HRT surface that is invariant under $G$.\footnote{One can also consider a slightly more general notion of symmetry, which includes isometries that exchange $\mathsf{A}$ with $\mathsf{A}^c$. Interestingly, it \emph{can} happen that there is no invariant HRT surface for such a symmetry. For an example, consider a spherically symmetric star at the center of an asymptotically AdS spacetime (of any dimension), and let $\mathsf{A}$ cover one hemisphere of the boundary. In this configuration, there will be only two HRT surfaces, which pass on either side of the star and are exchanged by the $\mathbb{Z}_2$ reflection that exchanges $\mathsf{A}$ with $\mathsf{A}^c$.}
\end{corollary}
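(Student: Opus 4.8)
The plan is to apply Theorem \ref{lemma:nested} to the $G$-orbit of a single HRT surface, and then exploit the fact that the two extreme wedges produced by that theorem are the union and intersection of the input wedges, which are manifestly $G$-invariant.

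First I would fix an arbitrary HRT surface $\gamma$ for $\mathsf{A}$ and form its orbit $\{g(\gamma):g\in G\}$. Since each $g\in G$ is an isometry fixing $\mathsf{A}$ setwise, it carries any homology region for $\mathsf{A}$ to another homology region for $\mathsf{A}$, maps extremal surfaces to extremal surfaces, and preserves area; hence each $g(\gamma)$ is again an HRT surface for $\mathsf{A}$. Because $G$ is finite, the orbit is finite; let $\gamma_1,\dots,\gamma_N$ be its distinct elements, with corresponding wedges $W_1,\dots,W_N$. As $G$ permutes the set $\{\gamma_i\}$, it also permutes the set of wedges $\{W_i\}$, so both $\bigcup_i W_i$ and $\bigcap_i W_i$ are $G$-invariant subsets of the bulk.

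Next I would invoke Theorem \ref{lemma:nested}, which produces a nested family of HRT wedges $\tilde W_N\subseteq\cdots\subseteq\tilde W_1$ with $\tilde W_1=\bigcup_i W_i$ and $\tilde W_N=\bigcap_i W_i$. By the previous step these two extreme wedges are $G$-invariant. The HRT surface associated with a wedge is recoverable from the wedge alone as its edge (the bifurcation surface where the future and past horizons of the wedge meet); since $G$ acts by isometries it preserves this causal structure and therefore maps the edge of a wedge to the edge of its image. Consequently the surface $\tilde\gamma_1$ (equivalently $\tilde\gamma_N$) bounding the $G$-invariant wedge $\tilde W_1$ is itself $G$-invariant, and is the desired symmetric HRT surface.

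The one place any care is needed is that I phrase everything in terms of wedges rather than homology regions on a maximin slice. Working with regions $r_i$ on a fixed slice $\sigma$ would require $\sigma$ to be $G$-invariant, an extra assumption one would have to arrange separately; by contrast the wedges are intrinsic causal-domain data, so their union and intersection are $G$-invariant for free, and Theorem \ref{lemma:nested} guarantees that precisely these set-theoretic combinations are realized as HRT wedges. I expect no serious obstacle beyond the elementary observation that the edge of a wedge is an isometry-invariant feature of the wedge, which is immediate.
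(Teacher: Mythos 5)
Your proposal is correct and follows essentially the same route as the paper: take the $G$-orbit of a single HRT surface and feed it into Theorem \ref{lemma:nested}, concluding that the resulting nested surfaces are $G$-invariant. The only difference is cosmetic --- the paper asserts invariance of all the $\tilde\gamma_a$ directly from the permutation symmetry of the construction, whereas you justify invariance of the extreme ones via the manifestly $G$-invariant union and intersection of wedges, a slightly more careful phrasing of the same argument.
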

\begin{proof}
Let $\gamma_1$ be an HRT surface for $\mathsf{A}$, and let the surfaces $\{\gamma_i\}$ be obtained by acting on $\gamma_1$ with the elements of $G$. Then the surfaces $\tilde\gamma_a$ constructed in lemma \ref{lemma:nested} are invariant under $G$.
\end{proof}

Lemma \ref{lemma:nested} has another corollary that strongly restricts the possible forms of degenerate HRT surfaces:
\begin{corollary}\label{intersection}
If the bulk spacetime is smooth, and if $\gamma_{1,2}$ are HRT surfaces for $\mathsf{A}$, then they intersect only on coincident connected components. In other words, any connected component of $\gamma_1$ and any connected component of $\gamma_2$ are either equal or non-intersecting.
\end{corollary}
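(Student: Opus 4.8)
The plan is to reduce the problem to Riemannian minimal surfaces on a common slice and then exploit the union/intersection construction of Theorem \ref{lemma:nested}. First I would invoke Lemma \ref{lemma:allHRT} to place both $\gamma_1$ and $\gamma_2$ on a single maximin slice $\sigma$, on which each is area-minimizing in the homology class of $\mathsf{A}_\sigma$ with area $S(\mathsf{A})$; using the smoothness hypothesis on the bulk together with elliptic regularity, both are smooth minimal hypersurfaces of $\sigma$. Applying Theorem \ref{lemma:nested} to the pair $\{\gamma_1,\gamma_2\}$ (so $N=2$) produces the ``union'' and ``intersection'' HRT surfaces $\tilde\gamma_1=\partial(r_1\cup r_2)\setminus\mathsf{A}$ and $\tilde\gamma_2=\partial(r_1\cap r_2)\setminus\mathsf{A}$, which are again smooth minimal surfaces, obey $|\tilde\gamma_1|+|\tilde\gamma_2|=|\gamma_1|+|\gamma_2|$, and for which the proof of that theorem already excludes locally complementary homology regions. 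Consequently, near any interior point of $\gamma_1\cap\gamma_2$ the two regions $r_1,r_2$ lie on the same side, so $\tilde\gamma_1$ is locally the graph of the pointwise maximum, and $\tilde\gamma_2$ the graph of the minimum, of the two surfaces.

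The first step is to rule out transversal intersections. Where a component of $\gamma_1$ crosses a component of $\gamma_2$ transversally, the boundary of the union region has a codimension-one ridge along which two smooth sheets meet at a nonzero angle, so $\tilde\gamma_1$ cannot be smooth; equivalently, rounding off this ridge strictly lowers its area below $S(\mathsf{A})$, contradicting minimality on $\sigma$. Hence every intersection is tangential. At a tangential intersection point $p$ I would write both surfaces as graphs $u_1,u_2$ over their common tangent plane. If $u_1-u_2$ changes sign near $p$, then on each of the two nonempty open sets where one graph strictly dominates, the smooth minimal surface $\tilde\gamma_1$ coincides with an honest minimal surface (namely $\gamma_1$, resp.\ $\gamma_2$); unique continuation for the elliptic minimal-surface operator then forces $\tilde\gamma_1$ to equal \emph{each} of them throughout a neighborhood, so $u_1=u_2$ near $p$. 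If instead one surface lies locally on one side of the other, the strong maximum principle for minimal surfaces gives local coincidence directly. Either way, $\gamma_1$ and $\gamma_2$ agree on a neighborhood of $p$.

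Finally I would upgrade this local coincidence to equality of connected components. Fixing a component $C_1\subseteq\gamma_1$ that meets $\gamma_2$, the set of points of $C_1$ at which $C_1$ coincides (in a neighborhood) with a component of $\gamma_2$ is nonempty by the previous step, open by the same step, and closed because the surfaces are closed so a limit of intersection points is again an intersection point; connectedness of $C_1$ then gives $C_1=C_2$, and symmetry finishes the proof.

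I expect the main obstacle to be the analytic input in the tangential sector. The transversal case and the topological open/closed argument are routine, but the tangential-crossing subcase cannot use a simple ordering of the two surfaces, so one must instead leverage the smoothness of the max/min surface together with unique continuation for the minimal-surface equation; making precise both that smoothness (where the bulk-smoothness hypothesis enters) and the applicability of the maximum principle and unique-continuation theorems is the delicate part of turning this plan into a rigorous proof.
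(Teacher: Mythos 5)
Your proof is correct and shares its skeleton with the paper's: both place $\gamma_1,\gamma_2$ on a common maximin slice via Lemma \ref{lemma:allHRT}, both feed the pair into Theorem \ref{lemma:nested} (with $N=2$) to produce the union/intersection surfaces $\tilde\gamma_{1,2}$, and both kill transverse intersections by observing that these surfaces would then have kinks (ridges), which an area minimizer on a smooth slice cannot have. Where you diverge is the tangential case. The paper argues that a tangency locus $L$ that is not a full connected component is impossible on its own terms: in Gauss normal coordinates adapted to $\gamma_1$, the height of $\gamma_2$ obeys Laplace's equation near $L$, so $L$ must have codimension at least two and is necessarily accompanied by codimension-one zeros with nonzero gradient, i.e.\ by transverse intersection points nearby --- thus everything reduces to the already-excluded transverse case, and no separate globalization is needed. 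You instead conclude local coincidence directly at each tangential point, via the strong maximum principle (one-sided subcase) or unique continuation applied to $\tilde\gamma_1$ (sign-changing subcase), and then globalize with an open-closed argument. The two treatments rest on the same analytic fact --- the difference of two minimal graphs satisfies a second-order elliptic equation with controlled nodal set --- so neither is more general, but the paper's route is more economical, while yours makes the roles of the maximum principle and unique continuation explicit. One point to make sure of in your sign-changing subcase: unique continuation requires $\tilde\gamma_1$ to be a genuine smooth minimal surface, which follows from its being a minimizer on $\sigma$ (by Theorem \ref{lemma:nested}) together with interior regularity --- the same input the paper uses implicitly when it declares kinks impossible in a smooth spacetime.
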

\begin{proof}
First we will show that, if $\gamma_{1,2}$ intersect other than on entire connected components, then they must intersect transversely (i.e.\ non-tangently) somewhere. Suppose they are tangent on a locus $L$ that does not include an entire connected component. Working in a maximin slice $\sigma$, and using Gauss normal coordinates adapted to $\gamma_1$, the position of $\gamma_2$ in a neighborhood of $L$ obeys Laplace's equation. This is impossible unless $L$ has codimension at least 2 (with respect to $\sigma$), in which case it has codimension-1 zeros with nonzero gradient, implying $\gamma_{1,2}$ intersect transversely.

It follows that the HRT surfaces $\tilde\gamma_{1,2}$ constructed in lemma \ref{lemma:nested} (from the union and intersection of their entanglement wedges) have kinks along the intersection locus. But this is impossible in a smooth spacetime.
\end{proof}
This corollary can also be considered a corollary of theorem 17(d) of \cite{Wall:2012uf}, which states that the HRT surfaces of nested boundary regions do not intersect except on entire connected components, where we regard $\mathsf{A}$ as nested with respect to itself.

\bibliographystyle{jhep}
\bibliography{references}

\end{document}